\documentclass[aps,pra,groupedaddress,superscriptaddress,notitlepage,twocolumn,longbibliography,nofootinbib]{revtex4-2}

\usepackage{amsmath,amssymb,amsthm,bm}
\usepackage{graphicx}
\usepackage[colorinlistoftodos]{todonotes}
\usepackage[inline]{enumitem}
\definecolor{darkblue}{rgb}{0.1,0.2,0.6}
\definecolor{darkred}{rgb}{0.8,0.1,0.2}
\definecolor{crimson}{RGB}{164,16,52}
\definecolor{darkgreen}{rgb}{0.31,0.62,0.24}
\usepackage[colorlinks=true, allcolors=crimson]{hyperref}
\usepackage{diagbox}
\usepackage{epigraph}
\usepackage{float}
\usepackage{multirow}
\usepackage{tikz}
\usepackage{mathtools}
\usepackage{braket}
\usepackage{soul} 
\usepackage[all]{xy}
\usepackage{lipsum}

\setlength{\epigraphrule}{0pt}
\setlength\epigraphwidth{0.7\textwidth}

\usepackage{bbm}

\usepackage{tabularx}
\newcommand{\rarrow}{\quad\Longrightarrow\quad}
\newcommand{\abs}[1]{\left\lvert#1\right\rvert}
\newcommand{\norm}[1]{\left\lVert#1\right\rVert}
\newcommand{\tr}[1]{\mathrm{Tr}\left(#1\right)}
\newcommand{\cM}{\mathcal{M}}
\newcommand{\cB}{\mathcal{B}}
\newcommand{\cS}{\mathcal{S}}
\newcommand{\cR}{\mathcal{R}}
\newcommand{\tB}{\tilde{\mathcal{B}}}
\newcommand{\tps}{\widetilde{\psi}}
\newcolumntype{Y}{>{\centering\arraybackslash}X}
\newcolumntype{M}[1]{>{\centering\arraybackslash}m{#1}}
\newcommand{\upward}[2]{\multirow{1}{*}[#1 em]{#2}}
\newtheorem{asmp}{Assumption}
\newcommand{\chao}[1]{{\color{magenta} Chao:~#1}}

\newcommand{\Rom}[1]{\uppercase\expandafter{\romannumeral#1}}

\newcommand{\ex}[1]{\left\langle #1 \right\rangle}

\newcommand{\proj}[1]{|#1\rangle \langle #1|}

\newcommand{\mc}{\mathcal}

\DeclareMathOperator{\Tr}{Tr}

\makeatletter
\renewcommand*\env@matrix[1][*\c@MaxMatrixCols c]{%
	\hskip -\arraycolsep
	\let\@ifnextchar\new@ifnextchar
	\array{#1}}
\makeatother

\newtheorem{theorem}{Theorem}
\newtheorem{lemma}{Lemma}

\newtheorem*{claim*}{Claim}


\usepackage[normalem]{ulem}

\newcommand{\comment}[1]{}

\setlength{\marginparwidth}{2cm} 

\begin{document}
\title{Mixed-State Entanglement Measures in Topological Order}
\author{Chao Yin}
\affiliation{Department of Physics and Center for Theory of Quantum Matter, University of Colorado, Boulder, Colorado 80309, USA}
\author{Shang Liu}
\email{sliu.phys@gmail.com}
\affiliation{Kavli Institute for Theoretical Physics, University of California, Santa Barbara, California 93106, USA}

\begin{abstract}
    Quantum entanglement is a particularly useful characterization of topological orders which lack conventional order parameters. 
    In this work, we study the entanglement in topologically ordered states between two arbitrary spatial regions, using two distinct mixed-state entanglement measures: the so-called “computable cross-norm or realignment” (CCNR) negativity, and the more well-known partial-transpose (PT) negativity. 
    We first generally compute the entanglement measures: We obtain general expressions both in (2+1)D Chern-Simons field theories under certain simplifying conditions, and in the Pauli stabilizer formalism that applies to lattice models in all dimensions. 
    While the field-theoretic results are expected to be topological and universal, the lattice results contain nontopological/nonuniversal terms as well. This raises the important problem of continuum-lattice comparison which is crucial for practical applications. 
    When the two spatial regions and the remaining subsystem do not have triple intersection, we solve the problem by proposing a general strategy for extracting the topological and universal terms in both entanglement measures. Examples in the (2+1)D $\mathbb{Z}_2$ toric code model are also presented. 
    In the presence of trisection points, however, our result suggests that the subleading piece in the PT negativity is not topological and depends on the local geometry of the trisections, which is in harmonics with a technical subtlety in the field-theoretic calculation.
\end{abstract}

\maketitle


\section{Introduction}

Gapped phases of matter with topological order have gained recent experimental advances \cite{topo_SC,topo_Rydberg,Google2022TCDefect,Dreyer2023TC,quantinuum_nonAbel23}, and are useful for many quantum information tasks due to the long-range entanglement in the quantum many-body state \cite{QI_meet_QM}.
It is then crucial to quantify such entanglement in topologically ordered states. As a breakthrough discovery, in 2D the subleading term in the entanglement entropy (EE) of a subregion contains universal information \cite{Hamma2005TCEE1,Hamma2005TCEE2,Kitaev2006TEE,Levin2006TEE,Fradkin2008TEE,Vishwanath2012TEE,Vishwanath2013TEEReview}. This so-called topological EE can be used to diagnose the underlying topological phase that lacks conventional order parameters, especially in recent quantum simulation experiments \cite{topo_SC,Dreyer2023TC}. Moreover, entanglement properties can even be used for classifying topological orders \cite{Kim2020EntBootstrapFusion,Kim2021EntBootstrapDW}. 

\comment{
There has been a tremendous amount of studies on the entanglement properties of topological orders since the discovery of topological entanglement entropy (EE) \cite{Hamma2005TCEE1,Hamma2005TCEE2,Kitaev2006TEE,Levin2006TEE}. In a 2D topologically ordered ground state, the subleading term in the EE of a subregion contains universal information \cite{Kitaev2006TEE,Levin2006TEE,Fradkin2008TEE,Vishwanath2012TEE,Vishwanath2013TEEReview} and can be used to diagnose the underlying topological phase that lacks conventional order parameters. 
}

As natural generalizations of EE, two spatial regions in a tripartite pure state share entanglement that is often quantified by the partial transpose (PT) negativity $\mathcal{E}_{\mathrm{PT}}$ \cite{Peres1996PT,Vidal2002Negativity}. Such mixed state entanglement measure is also studied in topological orders \cite{Vidal2013TCNeg,Castelnovo2013TCNeg,Wen2016NegBdry,Wen2016NegSurgery,Mulligan2021Disentangling,Ryu2022VertexStates,LiuCC2022IQHNeg}. However, these case-by-case studies work with either specific lattice models, or a continuum field theory that is applied to specific tripartitions and quantum states. A general prescription for comparing the continuum and lattice results is also lacking. Moreover, the continuum approach becomes tricky if the tripartition contains a trisection point \cite{Ryu2022VertexStates}, i.e., a point where the three parties meet. There, the subleading term of $\mc E_{\rm PT}$ may depend on the local trisection geometry, as computed for integer quantum Hall states \cite{LiuCC2022IQHNeg} following the studies of corner contributions to EE \cite{Sierra2010IQHCornerEE,Sirois2021IQHCornerEE}.


In this work, we tackle these problems for $\mathcal{E}_{\mathrm{PT}}$, together with another entanglement measure called the \emph{computable cross norm or realignment} (CCNR) negativity $\mathcal{E}_{\mathrm{CCNR}}$ \cite{Rudolph2005CCNR,Chen2002CCNR,Rudolph2003CCNRProperties}. These two measures quantify different kinds of mixed-state entanglement in general, and are easy to compute from the reduced density matrices, unlike many other entanglement measures \cite{rmp_entan09}. Although $\mathcal{E}_{\mathrm{PT}}$ is widely used in literature, the CCNR negativity has gained recent interest in quantum many-body systems \cite{Aubrun2012RandStateCCNR,Collins2016RMT,Yin2022CFTCCNR,Liu2022CCNR,holo2022CCNR} due to its nice properties. For example, $\mathcal{E}_{\mathrm{CCNR}}$ is related to entanglement quantities with a nice holographic dual \cite{Dutta2021ReflectedEntropy}, and has a simpler topological structure \cite{Yin2022CFTCCNR} than $\mathcal{E}_{\mathrm{PT}}$ in (1+1)D. Intriguingly, we will also find a difference between the two measures in topological orders, namely $\mathcal{E}_{\mathrm{CCNR}}$ is ``more topological'' than $\mathcal{E}_{\mathrm{PT}}$ in certain cases.

To calculate the entanglement measures, we take both a continuum approach using (2+1)D Chern-Simons (CS) theories \cite{Witten1989CSTheory} and a lattice approach using the stabilizer formalism \cite{NielsenChuangBook}. 
These two methods are complementary to each other: the former can deal with much more general topologically ordered states in (2+1)D, while the latter applies to all stabilizer states beyond 2D topological orders and to trisection points. 

In the continuum, after presenting several concrete examples, we give general formulas for the two negativities for a large class of tripartitions and Wilson line (WL) configurations, including most of the cases in Refs.\,\onlinecite{Wen2016NegBdry} and \onlinecite{Wen2016NegSurgery} as specific examples. Our derivation utilizes the locality properties of the surgery method \cite{Witten1989CSTheory}, and reproduces expressions of EE as well. On lattices, we derive general formulas for the two entanglement measures in the stabilizer formalism, and apply them to the $\mathbb{Z}_2$ toric code (TC) model. Although the field-theoretic results are expected to be topological and universal, the lattice results contain nonuniversal terms, so we proceed to investigate the crucial problem of continuum-lattice comparison. In the absence of trisection points, we explain how to extract the topological and universal terms in both negativities on lattices. For trisection points, the subleading piece in the lattice PT negativity seems to be not topological and depends on the local geometry of trisection. This is in harmonics with a technical problem encountered in the continuum approach. The CCNR negativity, however, is topological at least in the lattice example we consider and computable in CS theories. 

The rest of this paper is organized as follows. We review some preliminaries in Section \ref{sec:Preliminaries}, including the various entanglement/correlation measures, and the CS theory. In Sections \ref{sec:CSEnt} and \ref{sec:StabilizerEnt}, we compute the mixed-state entanglement measures in CS theories and in stabilizer states, respectively. In Section \ref{sec:LatticeExamples}, we explain how to compare lattice and continuum results, and examine a few examples in the TC model. We conclude and comment on possible future directions in Section \ref{sec:Discussions}. Some technical calculations in CS theories are elaborated in the Appendix.

\section{Preliminaries}\label{sec:Preliminaries}

\subsection{Entanglement and correlation measures }

Given a pure state $\ket{\psi}$ shared by two parties $A$ and $B$, we define the reduced density matrix $\rho_A=\Tr_B(\ket{\psi}\bra{\psi})$, and the Renyi EE \begin{align}
    S^{(n)}_A=(1-n)^{-1}\log\Tr(\rho_A^n),
\end{align}
which characterizes the entanglement between the two parties. The base of the logarithm is an arbitrary positive number and is fixed once for all. $n$ can be any nonnegative real number, and typically one calculates for integer $n$ and then analytic continue to generic $n$. Setting $n\rightarrow 1$ yields the von Neumann entanglement entropy $S_A=-\Tr(\rho_A\log\rho_A)$. 

We further divide $A$ into $A_1$ and $A_2$, and question about the entanglement between them. Let $\{ \ket{c} \}$ and $\{ \ket{d} \}$ be orthonormal bases of $A_1$ and $A_2$, respectively. We define the partial transposed density matrix $\rho_A^{T_2}$, and a realignment matrix $R_\rho$ according to the following equation. 
\begin{equation}
	\raisebox{-5 pt}{\includegraphics{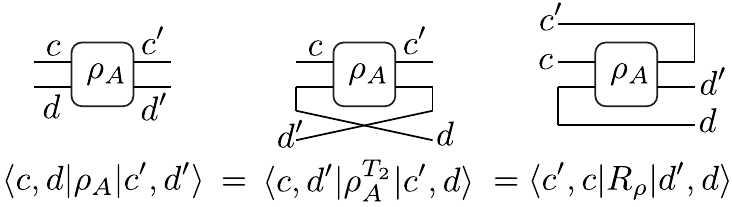}}
\end{equation}
The PT and CCNR negativity are then defined as \begin{subequations} \begin{align}
    \mc E_{\rm PT}& =\log \norm{\rho_A^{T_2}}_1, \\
    \mc E_{\rm CCNR}& =\log\norm{R_\rho}_1,
\end{align}
\end{subequations}
respectively, where the $1$-norm is given by $\norm{R_\rho}_1:=\Tr\sqrt{R_\rho^\dagger R_\rho}$. 
When $\rho_A$ is separable/unentangled, $\mc E_{\rm PT}=0$ and $\mc E_{\rm CCNR}\leq 0$. Therefore, either $\mc E_{\rm PT}>0$ or $\mc E_{\rm CCNR}> 0$ implies $\rho_A$ to be entangled, and these two quantities can be regarded as mixed state entanglement measures. For example, $\mc E_{\rm PT}$ upper bounds how many Bell pairs can be distilled from the state \cite{Vidal2002Negativity}. It is known that these two separability criteria are not comparable \cite{Chen2002CCNR,Rudolph2003CCNRProperties} -- neither one is stronger than the other. More precisely, there are entangled states with $\mc E_{\rm PT}>0$ but $\mc E_{\rm CCNR}\leq 0$, and vice versa. 

Both $\mc E_{\rm PT}$ and $\mc E_{\rm CCNR}$ can be computed from replica tricks: \begin{subequations} \begin{align}
    \mc E_{\rm PT} &=\lim_{\text{even } n\rightarrow 1}\log\Tr[(\rho_A^{T_2})^{n}], \label{eq:PT_repli} \\
    \mc E_{\rm CCNR} &=\lim_{\text{even } n\rightarrow 1}\log\Tr[(R_\rho^\dagger R_\rho)^{n/2}], \label{eq:CCNR_repli}
\end{align}
\end{subequations}
where one analytically continues from \emph{even} integer $n$. Note that the two equations are consistent because $\left(\rho_A^{T_2}\right)^\dagger=\rho_A^{T_2}$.
The Renyi mutual information $I^{(n)}(A_1,A_2)=S^{(n)}_{A_1}+S^{(n)}_{A_2}-S^{(n)}_{A}$ is another useful characterization of bipartite correlations. It contains both classical and quantum correlations and hence does not lead to a separability criterion. Nevertheless, the mutual information can be easily computed using EE results, which we will also derive as byproducts.

\subsection{Chern-Simons theory and surgery method}\label{sec:CS_intro}

Here we briefly review Chern-Simons Theory, which is the most important class of topological quantum field theory that describes topological orders. We refer to the classic paper \cite{Witten1989CSTheory} for details. A CS theory living in a 3D manifold $M$, is given by a partition function $Z(M)=\int [\mathcal{D} A] \mathrm{e}^{\mathrm{i} S} $ with action \begin{equation}
    S = \frac{k}{4\pi} \int_M \tr{A\wedge \mathrm{d} A + \frac{2}{3} A\wedge A\wedge A },
\end{equation}
where $A$ is a gauge field of some gauge group $G$, $k$ is an integer called level. This theory is topological because the action does not depend on the metric. As a gauge theory, operators that are gauge invariant are defined by Wilson lines (WLs) \begin{equation}
    W_R^C(A) = \mathrm{Tr}_R\left(\mathcal{P} \exp \int_C A \right),
\end{equation}
where $C$ is an oriented closed curve in $M$, $\mathcal{P}$ stands for path-ordering, and $R$ is an irreducible representation of group $G$. Inserting WLs in the path integral yields the physically interesting correlation function of them \begin{equation}\label{eq:ZMRR}
    Z\left(M;W_{R_1}^{C_1},W_{R_2}^{C_2},\cdots\right)=\int [\mathcal{D} A] \mathrm{e}^{\mathrm{i} S} W_{R_1}^{C_1}W_{R_2}^{C_2}\cdots.
\end{equation}
Different WLs do not intersect, but can form links and knots.

The 2D boundary $\partial M$ of $M$ hosts a Hilbert space, where each quantum state is defined by the path integral in the interior of $M$ with a particular WL configuration. For example, consider $M=D^3$ is a solid ball, with a sphere boundary $\partial M=S^2$. If there are no WLs inside $M$, the corresponding 2D state is the vacuum of the theory. If there is one WL of representation $R_a$ that punctures the boundary at two points, as shown in Fig.~\ref{fig:SpherePartitions}(a), the sphere is in the unique quantum state consisting a pair of anyons $a,\bar a$ that are dual to each other. Setting $a$ to be the trivial representation (which we denote by $0$) corresponds to the previous vacuum case of no anyons. Inserting more WLs create more anyons, which can fuse and braid with each other.

\begin{figure}
    \centering
    \includegraphics[width=0.48\textwidth]{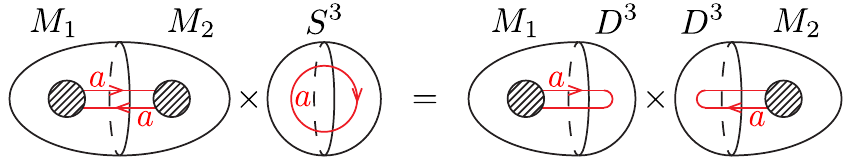}
    \caption{In the surgery method, the manifold $M=M_1\cup M_2$ can be cut along an $S^2$ cross section traversed by two WLs of representation $a$, using a $S^3$ with a Wilson loop of the same representation $a$ in it. The result is two disconnected manifolds $M_L=M_1\cup D^3$ and $M_R=D^3\cup M_2$ with WLs reconnected. The case where no WL traverses the cross section is just setting $a=0$.}
    \label{fig:surgery}
\end{figure}

Partition function \eqref{eq:ZMRR} for any manifold with complicated WL insertions is in principle calculable by the surgery method. The idea is basically to cut the manifold into smaller and simpler pieces. For the simplest case, a single closed WL in 3D sphere $M=S^3$ yields \begin{equation}\label{eq:Za=S0a}
    Z_a := Z(S^3;R_a) = \mathcal{S}_{0a},
\end{equation}
where $\mathcal{S}_{ab}$ is the modular $\mathcal{S}$ matrix of the theory, a unitary matrix indexed by anyon types $\{0,a,b,\cdots\}$ \footnote{$\mc S$ matrix is defined both in the theory of anyons \cite{Kitaev2006Honeycomb,BondersonThesis} and in rational conformal field theories \cite{CFTBook}. CS theories connect these two contexts together. }. If $M$ has a cross section $S^2$ traversed by a single pair of WLs (on the left of Fig.~\ref{fig:surgery}), it can be cut by surgery as shown in Fig.~\ref{fig:surgery}. More precisely, the process introduces three additional manifolds: an $S^3$ with a WL of representation $R_a$ given by \eqref{eq:Za=S0a}, and $M_L,M_R$ that are the two manifolds after cutting the $S^2$ cross section (and reconnecting the cutted WL in each manifold). The corresponding partition functions satisfy \begin{equation}\label{eq:ZZ=ZZ}
    Z(M)\cdot Z_a = Z(M_L) \cdot Z(M_R).
\end{equation}
$M_L,M_R$ are potentially simpler manifolds than $M$, and can be further simplified by surgery \eqref{eq:ZZ=ZZ} on each of them. This process may ultimately reduce the original $M$ problem to a bunch of $S^3$s with known partition function \eqref{eq:Za=S0a}.


The intuition behind \eqref{eq:ZZ=ZZ} is as follows. The partition function can be viewed as an inner product \begin{equation}
    Z(M) = \braket{\psi_L|\psi_R},
\end{equation}
where $\ket{\psi_L},\ket{\psi_R}$ are the left/right states on the cut interface $S^2$. Since the Hilbert space of $S^2$ with one pair of anyons is one-dimensional, the two states $\ket{\psi_L},\ket{\psi_R}$ simply differ by a complex number prefactor (not just a phase because they are not normalized).  On the other hand, a $S^3$ with a Wilson loop $a$ is also an inner product $Z_a = \braket{\psi_a|\psi_a}$, where the state $\ket{\psi_a}$ lives in the same Hilbert space as $\ket{\psi_L},\ket{\psi_R}$. Using this reference state, \eqref{eq:ZZ=ZZ} simply comes from the identity \begin{equation}
    \braket{\psi_L|\psi_R} \braket{\psi_a|\psi_a} = \braket{\psi_L|\psi_a} \braket{\psi_a|\psi_R},
\end{equation}
for any three states in an one-dimensional Hilbert space.

We will calculate entanglement measures in CS theory using this surgery method. The results will be expressed by the quantum dimensions $d_a$ of anyon type $a$ defined by \begin{equation}
    d_a = \mathcal{S}_{0a}/ \mathcal{S}_{00},
\end{equation}
and the total quantum dimension \begin{equation}
    \mathcal{D} = \sqrt{\sum_a d_a^2},
\end{equation}
which equals $(\mc S_{00})^{-1}$ due to unitarity of $\mathcal{S}_{ab}$.
Intuitively, $d_a\geq 1$ may be understood as the Hilbert space dimension shared by each anyon in the limit of many anyons: In the presence of $M\gg 1$ number of type-$a$ anyons, the dimension of the low-lying degenerate subspace scales as $d_a^M$ \cite{Nayak2008RMP}. A topological order is called abelian if $d_a=1$ for all $a$, and is called nonabelian if $d_a>1$ for some $a$. 

\section{Entanglement Measures in Chern-Simons Theories}\label{sec:CSEnt}

In this section we compute the negativities using the surgery method in (2+1)D CS theories. As discussed in Section \ref{sec:CS_intro}, the quantum state lives on the surface of a $3$D manifold, and is defined by path integral in the interior. The surface is tripartitioned to three parties $A_1,A_2,B$, and we compute the PT and CCNR negativities $\mathcal{E}^{\rm top}_{\rm PT},\mathcal{E}^{\rm top}_{\rm CCNR}$ between $A_1$ and $A_2$. Here we use superscript ``$\rm top$'' because CS theory is topological, and to differentiate with latter results that also contain nontopological contributions.

\subsection{Results for simple cases}


\begin{figure}
	\centering
	\includegraphics{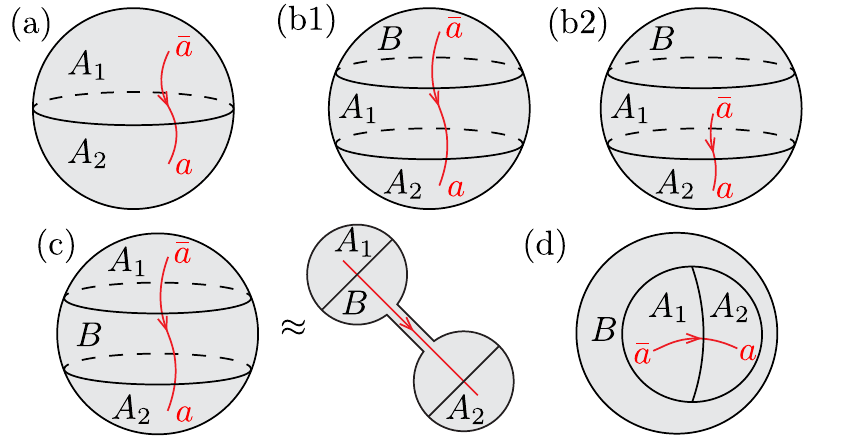}
	\caption{Wave function $\ket{\psi}$ on a sphere with different partition and WL configurations. }
	\label{fig:SpherePartitions}
\end{figure}

We first present results for several examples, before sketching the derivation and general formulas. 
Let us first consider the state on a sphere $S^2$ consisting of a pair of dual anyons $a,\bar a$. The state is uniquely given by the partition function in the solid ball $D^3$ with a WL of representation $a$ puncturing the surface. Setting $a$ to be trivial corresponds to the case of no anyons.  We consider several different partition scenarios as shown in Fig.\,\ref{fig:SpherePartitions}, and the negativities between $A_1$ and $A_2$ are given in Table~\ref{Table:SpherePartitions}, where the PT negativity results are mostly directly from Ref.\,\onlinecite{Wen2016NegSurgery}. We observe that the CCNR negativity depends on all boundary sections of $A_1$ and $A_2$, not just their interface, as opposed to the PT negativity. For example, cases b1 and b2 have the same interface between $A_1$ and $A_2$ (one $D^2$ traversed by one WL $a$), while the interface between $B$ and $A_1$ is different: b1 is traversed by WL $a$ while b2 is not. As a result, the two cases have the same PT negativity, but different CCNR negativity. The PT negativity result for the trisection scenario in Fig.\,\ref{fig:SpherePartitions}(d) is absent due to a subtle technical problem: With the replica approach, one encounters topological spaces that are not manifolds -- suspensions of tori with many handles. Thus the surgery method does not apply here. This casts doubt on the topological nature of $\mc E_{\rm PT}$, which we will return to in later sections using lattice methods.

\begin{table}
	\centering
	\begin{tabular}{|c|c|c|}
		\hline
		{\bf Case} & $\boldsymbol{\mc E_{\rm CCNR}^{\rm top}}$ & $\boldsymbol{\mc E_{\rm PT}^{\rm top}}$ \\
		\hline
		
		{a} & $\log d_a-\log\mc D$ & $\log d_a-\log\mc D$\\
		
		{b1} & $(\log d_a-\log\mc D)/2$ & $\log d_a-\log\mc D$ \\
		
		{b2} & $\log d_a-(1/2)\log\mc D$ & $\log d_a-\log\mc D$ \\
		
		{c} & $\log\mc D-\log d_a$ & $0$ \\
		
		{d} & $\log d_a$ & $*$\\
		\hline
		
	\end{tabular}
	\caption{Topological CCNR and PT negativities between $A_1$ and $A_2$ for the sphere states in Fig.\,\ref{fig:SpherePartitions}, computed by the surgery method of CS theory. Most PT negativity results are directly from Ref.\,\onlinecite{Wen2016NegSurgery}.  }
	\label{Table:SpherePartitions}
\end{table}

We analogously compute the negativities for torus states, where topological ground state degeneracy exists: The Hilbert space on a torus $T^2=S^1\times S^1$ without anyon excitations is multidimensional. An orthonormal basis of this Hilbert space, denoted as $\{ \ket{R_a} \}$, 
corresponds to a finite set of irreducible representations of the gauge group. The state $\ket{R_a}$ can be prepared by performing path integral on the solid torus (bagel) $D^2\times S^1$ inserted by a noncontractible Wilson loop carrying the corresponding representation $R_a$; see Fig.\,\ref{fig:TorusPartitions}. It is an eigenstate of Wilson loop operators along the perpendicular direction \cite{Vishwanath2012TEE}.
We consider a general state $\ket{\psi}=\sum_a\psi_a\ket{R_a}$ with the normalization condition $\braket{\psi|\psi}=\sum_a|\psi_a|^2=1$. 
In Table~\ref{Table:TorusPartitions}, we present results on both the CCNR and PT negativities for the tripartitions shown in Fig.\,\ref{fig:TorusPartitions}(a)-(d). The $\psi_a$ dependence is proposed to distinguish abelian and nonabelian topological orders \cite{Wen2016NegBdry,Wen2016NegSurgery}.

\begin{figure}
	\centering
	\includegraphics{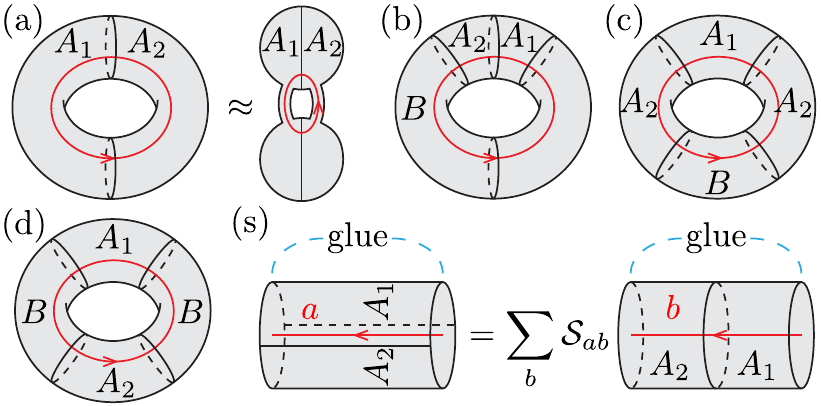}
	\caption{(a)-(d) Wave function $\ket{\psi}=\sum_a\psi_a\ket{R_a}$ on a torus with different partitions. (s) Effect of the $\mc S$ transformation. }
	\label{fig:TorusPartitions}
\end{figure}

\begin{table}
	\centering
	\begin{tabular}{|c|c|c|}
		\hline
		{\bf Case} & $\boldsymbol{\mc E_{\rm CCNR}^{\rm top}}$ & $\boldsymbol{\mc E_{\rm PT}^{\rm top}}$ \\
		\hline
		
		{a} & $2\log(\sum_a|\psi_a|d_a)-2\log\mc D$ &  $2\log(\sum_a|\psi_a|d_a)-2\log\mc D$ \\
		
		{b} & $0$ & $\log(\sum_a|\psi_a|^2d_a)-\log\mc D$ \\
		
		{c} & $\log(\sum_a|\psi_a|^2d_a)-\log\mc D$ & $\log(\sum_a|\psi_a|^2d_a^2)-2\log\mc D$\\
		
		{d} & $\log(\sum_a|\psi_a|^2/d_a^2)+2\log\mc D$ & $0$\\
		\hline
		
	\end{tabular}
	\caption{Topological CCNR and PT negativities between $A_1$ and $A_2$ for the torus states in Fig.\,\ref{fig:TorusPartitions}, computed by the surgery method of CS theory. The PT negativity results are directly from Ref.\,\onlinecite{Wen2016NegSurgery}. }
	\label{Table:TorusPartitions}
\end{table}

\subsection{Surgery calculation: An example}
We calculate one example above in full detail, and the other cases follow similarly. Moreover, this calculation will motivate more general results. The first step is to deform the 3D manifold to a bunch of balls $D^3$ connected by tubes, see examples Fig.~\ref{fig:SpherePartitions}(c) and Fig.~\ref{fig:TorusPartitions}(a), such that each edge between different parties are only supported on the balls, and each ball hosts at most one edge. A ball is called an \emph{edge ball} if it contains one edge, and a \emph{party ball} if it contains no edge and belongs to one party. Each tube connecting the balls has cross section $D^2$ (so we will call them $D^2$-tubes), and belongs to a single party.

\begin{figure}[h]
    \centering
    \includegraphics[width=0.48\textwidth]{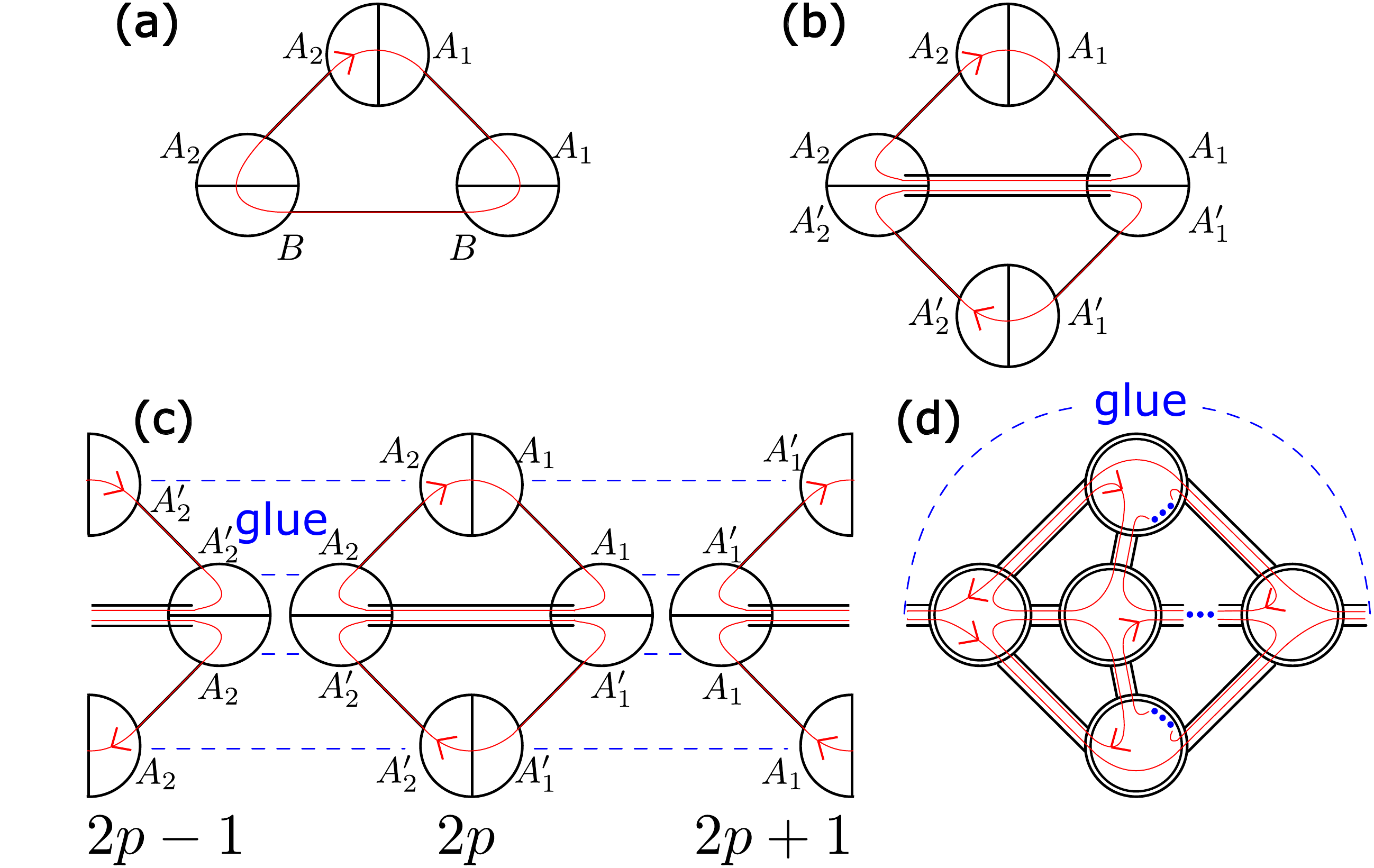}
    \caption{Calculation of CCNR negativity for Fig.~\ref{fig:TorusPartitions}(b). (a) The ball-tube representation of $\ket{\psi}$.  Here a circle represents a ball $D^3$, with an equator if it is an edge ball. A $D^2$-tube is shown by a single black straight line. (b) The path integral representation of $\rho_A$, where two parallel black straight lines represent a $S^2$-tube, which comes from gluing two $D^2$-tubes.  (c) The way to glue $n$ copies of $\rho_A$ together to get $\Tr[(R_\rho^\dagger R_\rho)^{n/2}]$, with the result in (d), where the middle row contains $n$ $S^3$s. Here a double-line circle represents a $S^3$ that comes from gluing two balls $D^3$. }
    \label{fig:surgery_exp}
\end{figure}

Now focus on calculating CCNR for Fig.~\ref{fig:TorusPartitions}(b) with fixed WL representation $\ket{\psi}=\ket{R_a}$. The torus under tripartition is deformed into the ball-tube system symbolized in Fig.~\ref{fig:surgery_exp}(a), where three edge balls are connected by $D^2$-tubes in a circle. To trace over $B$ and get $\rho_A$, we take the torus with its conjugated replica (with orientation and WL direction reversed), and glue the $B$ regions together in Fig.~\ref{fig:surgery_exp}(b). In the ball-tube representation for $\ket{\psi}$, this can be done \emph{separately} for each ball: The $A_1A_2$ ball just doubles with another $A'_1A'_2$, while the $A_1B$ ($A_2B$ similar) ball glues with a $BA'_1$ to produce a $A_1A'_1$ ball. After gluing the balls, it remains to glue the two $D^2$-tubes belonging to $B$: the result is one tube with cross section $S^2$ (two $D^2$s glued together) that connects the interiors of ball $A_1A'_1$ and $A_2A'_2$. 

Next, we need to take $n$ (even) copies of $\rho_A$ and glue them such that the partition function yields $\Tr[(R_\rho^\dagger R_\rho)^{n/2}]$ in the replica trick. In this case, it is convenient to align the $n$ copies in a horizontal row (with periodic boundaries due to the trace), with the odd ones rotated in the paper by $180$ degree. Then we just glue each copy to the part of its two neighbors that face towards it, as shown in Fig.~\ref{fig:surgery_exp}(c). Still, we glue balls first before tubes. For all $p=1,\cdots,n/2$, the $A_1A'_1$ ($A_2A'_2$) ball of the $(2p)$-th copy glues with that of the $(2p+1)$-th ($(2p-1)$-th) copy, which produces a $S^3$. The $n$ balls shared by the two parties (either $A_1A_2$ or $A'_1A'_2$), however, are glued in two disconnected groups where each group produces a $S^3$. Finally, $D^2$-tubes are glued in pairs to $S^2$-tubes, while the $S^2$-tubes connecting $A_1A'_1$ with $A_2A'_2$ remain, and we arrive at a complicated topology Fig.~\ref{fig:surgery_exp}(d) with WLs threading inside. 

We then cut each $S^2$-tube using surgery \eqref{eq:ZZ=ZZ}, and the price is just invoking an extra factor $Z_a^{-1}$. 
Furthermore, after the cut, there is one Wilson loop of representation $a$ left in each $S^3$. Since there are $3n$ $S^2$-tubes and $n+2$ $S^3$s, and the partition function of disconnected manifolds factorize, we have \begin{equation}
    \Tr[(R_\rho^\dagger R_\rho)^{n/2}] = Z_a^{-3n} Z_a^{n+2} = Z_a^{2-2n}\rightarrow 1,
\end{equation}
in the limit $n\rightarrow 1$, yielding $\mc E^{\rm top}_{\rm CCNR}=0$ as given in Table~\ref{Table:TorusPartitions}.

We make two observations in the calculation above. First, each ball in $\ket{\psi}$ is glued with its replicas in a \emph{local} way independent of the other balls/tubes. Second, we merely count the number of $S^2$-tubes and $S^3$s in the end, and it does not matter which $S^3$ is connected to which in the \emph{global} topology of Fig.~\ref{fig:surgery_exp}(d). These properties also hold when the WL representation is not fixed, and lead to the general result below.

\subsection{General result}


The above observations enable calculation for a general tripartite state $\ket{\psi}$ satisfying the following assumptions:
\begin{itemize}
	\item The spacetime manifold $M$ has the form of solid balls connected by solid tubes ($D^2\times [0, 1]$), with at most one circular interface on each ball, and no interface on any tube. 
	\item  The WLs can be deformed, without any touching among themselves or between WL endpoints and the interface, to a configuration that all WLs are contained in the 2D surface $\partial M$, and each tube is traversed by the WLs at most once. 
\end{itemize}
The second assumption avoids braiding of WLs during surgery. The first assumption implies that the tripartition interface is a disjoint union of circles that are contractible in the 3D spacetime, and thus there is no trisection points as in Fig.\,\ref{fig:SpherePartitions}(d). All other examples above (Fig.\,\ref{fig:SpherePartitions}(a)-(c) and Fig.\,\ref{fig:TorusPartitions}(a)-(d)) satisfy the two assumptions; in particular, we illustrate the ball-tube systems in Fig.\,\ref{fig:SpherePartitions}(c) and Fig.\,\ref{fig:TorusPartitions}(a). Note that it is also possible to consider interfaces made of noncontractible loops, by applying a diffeomorphism on the torus surface, which acts on the Hilbert space as a linear transformation \cite{Witten1989CSTheory}. More concretely, we illustrate in Fig.\,\ref{fig:TorusPartitions}(s) a case where the assumptions are satisfied after the modular $\mc S$ transformation. 
We establish the following result. 
\begin{theorem}\label{thm:EntMeasureRelationsCS}
	Consider a CS theory under the above two assumptions with WLs (either closed or open) labeled by $w$. If each WL carries a definite representation, then the topological entanglement measures satisfy (superscripts ``top'' omitted)
	\begin{align}
        & S_P^{(n)} = -E_P \log \mathcal{D} + \textstyle{\sum_w} K_P(w) \log d_{a(w)}, \label{eq:EEFormula}\\
		&\mc E_{\rm CCNR}(A_1,A_2)=(S^{(n)}_{A_1}+S^{(n)}_{A_2})/2-S^{(n)}_A, \label{eq:CCNREntropyRelation}\\
		&\mc E_{\rm PT}(A_1,A_2)= I^{(n)}(A_1,A_2)/2, \label{eq:PTEntropyRelation}
	\end{align}
	for arbitrary $n$. Here in the first line, $P\in\{A_1,A_2,B\}$ is the party, $E_P$ is the number of interfaces shared by $P$, $K_P(w)$ is the number of interfaces shared by $P$ that WL $w$ traverses, and $a(w)$ is the representation of $w$.
\end{theorem}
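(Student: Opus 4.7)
The plan is to lift the worked surgery calculation of Fig.~\ref{fig:surgery_exp} to a general bookkeeping argument, exploiting the two observations stated just after that example: the replica gluing is local on each ball of the ball-tube decomposition, and only the counts of the resulting $S^3$ components and cut $S^2$-tubes enter the final partition function. Under Assumptions 1 and 2, every replica manifold of interest is again a ball-tube system, so the whole computation reduces to counting.

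First, I would construct the three families of replica partition functions — $\Tr(\rho_P^n)$ for $P\in\{A_1,A_2,B,A\}$, $\Tr[(R_\rho^\dagger R_\rho)^{n/2}]$, and $\Tr[(\rho_A^{T_2})^n]$ — by stacking $n$ copies of $M\cup\bar M$ and gluing their boundaries according to the definitions of EE, CCNR, and PT. Because each interface circle sits on a single ball, the gluing prescription factorizes ball by ball. I would then classify the balls of $M$ as \emph{party balls} (entirely in one party $P$) and \emph{edge balls} (shared between two parties $P,P'$), and for each class and each of the three replica prescriptions, compute locally the number of $S^3$ components the ball yields together with the Wilson-loop arcs that close up inside them. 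In parallel, each tube of $M$ becomes a fixed number of $S^2$-tubes (Assumption 2 guarantees each is traversed by at most one WL), every one of which is then cut via \eqref{eq:ZZ=ZZ} at the cost of a factor $Z_{a(w)}^{-1}$ or $Z_0^{-1}$.

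After all tube surgeries the replica manifold is a disjoint union of $S^3$s with Wilson loops, so its partition function is a product of $Z_a=d_a/\mc D$ and $Z_0=1/\mc D$. Substituting and collecting the exponents in $n-1$ for $\Tr(\rho_P^n)$ gives \eqref{eq:EEFormula}, with $E_P$ arising from edge balls touching $P$ and $K_P(w)$ counting WL traversals of interfaces on $P$. For $\Tr[(R_\rho^\dagger R_\rho)^{n/2}]$ and $\Tr[(\rho_A^{T_2})^n]$, the party-ball and tube contributions coincide with those of the EE calculation for the appropriate groupings of parties; the edge-ball contributions differ in a controlled way, and verifying that those differences sum to precisely $(S^{(n)}_{A_1}+S^{(n)}_{A_2})/2-S^{(n)}_A$ and $I^{(n)}(A_1,A_2)/2$ establishes \eqref{eq:CCNREntropyRelation} and \eqref{eq:PTEntropyRelation}.

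The main obstacle is the edge-ball accounting for the PT and CCNR replicas. Under PT, the $A_1$ and $A_2$ halves of each edge ball are identified with opposite cyclic orientations, while under CCNR they follow the parity-alternating pattern of Fig.~\ref{fig:surgery_exp}(c); in each case I have to verify that the local count of $S^3$ components (and the representations of the WL arcs trapped inside them) is exactly what the claimed identities require. Once this finite collection of local checks is carried out, the global identity is automatic because, by the second observation, party-ball and tube contributions are identical across the three replica prescriptions and the details of the global topology wash out in the counting.
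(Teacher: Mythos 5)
Your proposal follows essentially the same route as the paper's Appendix proof: decompose the replica manifold ball by ball, tabulate the local $S^3$/$S^2$-tube counts (with their Wilson-loop arcs) for party balls and edge balls under each of the three replica prescriptions, cut tubes by surgery at a cost of $Z_{a}^{-1}$ per tube, and collect the exponents of $Z_a=\mc S_{0a}$ in the $n$-dependence. The finite list of local edge-ball checks you defer is exactly what the paper carries out in its Tables of local extra factors, so the plan is sound and matches the paper's argument.
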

\noindent 
See Appendix for more general results of indefinite representations and the proof. The underlying idea, already illustrated in the previous example, is that gluing replicas and cutting tubes in surgery are both \emph{local} operations: In the ball-tube system, each
ball (with its connected tubes) is operated separately to
yield a local factor, and the partition function of
the replicated manifold is just the product of all local
factors, independent of the \emph{global} topology on which ball
is connected to which \footnote{Strictly speaking, one needs to deal with an extra summation over representations in the general case; see Appendix~\ref{sec:WL}.}. We expect that our method can be further generalized, for example, to calculate CCNR negativity of tripartitions (which does not satisfy the aforementioned assumptions), since we observe that Eq.\,\ref{eq:CCNREntropyRelation} is also satisfied for the tripartition in Fig.\,\ref{fig:SpherePartitions}(d).


\subsection{In what sense are the negativities topological?}

In a generic (2+1)D field theory, similar to EE, we expect either negativity $\mc E$ for a smooth partition contains ``area-law'' terms proportional to the lengths of different interface sections \cite{Kitaev2006TEE,Levin2006TEE,Wen2016NegBdry}, together with a subleading piece $\mc E^{\rm top}$ that is topological (insensitive to deformations of the partition interface) and universal (insensitive to deformations of the action). 
More precisely, we propose
\begin{align}
    \mc E=\sum_i\beta_i l_i+\mc E^{\rm top}+\cdots
    \label{eq:NegGenericExpression}
\end{align}
where the index $i$ labels different interface sections, $\beta_i$ are nonuniversal coefficients, $l_i$ is the length of the $i$-th section, and ``$\cdots$'' contains terms of higher order in $l_i$. When the partition interface is not smooth, e.g. containing corners, there may be additional nontopological subleading terms in the above summation.

As shown above, negativities in CS theories are either not calculable by surgery for certain non-smooth partitions (e.g. PT negativity for the trisection Fig.~\ref{fig:SpherePartitions}(d)), or given by some pure topological result $\mathcal{E}^{\rm top}$ without the area-law terms. The latter is reasonable because as topological field theories, CS theories have no dependence on metrics. We \emph{conjecture} that the first case (failure of surgery calculation) implies the presence of nontopological terms, while the second case corresponds to the absence of such terms, justifying our notation $\mc E^{\rm top}$ for results computed in CS theory.
In this work, we will not compute the negativities for generic field theories, but we will verify this conjecture by comparing to certain lattice models. 
We will later explain how to define $\mc E^{\rm top}$ on lattice systems, and comment on a caveat about the so-called spurious long-range entanglement. 



\section{Entanglement Measures for Stabilizer States }\label{sec:StabilizerEnt}

As a complementary approach, we also compute the CCNR and PT negativities in a general Pauli stabilizer state. The results are useful for studying stabilizer lattice models, especially with trisection points. The base of logarithm will be set to $2$ for convenience. We consider a state $\ket{\psi}$ uniquely determined by a stabilizer group $G$. Here, $G$ is an abelian group of (multi-site) Pauli operators with $\pm$ signs such that $-1\notin G$, and $\ket\psi$ is the unique simultaneous eigenstate with eigenvalue $+1$ for all $g\in G$. Let $N$ be the total number of qubits in the Hilbert space. We necessarily have $|G|=2^N$, and we say the rank, or the number of generators, of $G$ is $N$. 

The EE in a stabilizer state can be conveniently computed \cite{Chuang2004StabilizerEE}. Let $A$ be a subsystem consisting $N_A$ number of qubits, and $G_A\subset G$ be the subgroup of stabilizers inside $A$. The full and reduced density matrices $\rho$ and $\rho_A$ are respectively given by
\begin{align}
	\rho:=\proj{\psi}=\frac{1}{2^N}\sum_{g\in G}g,\quad \rho_A
	=\frac{1}{2^{N_A}}\sum_{g\in G_A}g. 
\end{align}
Let $G_A$ have $k$ generators, or equivalently $|G_A|=2^k$, then the entropy of subsystem $A$ for any Renyi index is given by $S^{(n)}_A=N_A-k$ \cite{Chuang2004StabilizerEE}. 

Now we divide subsystem $A$ further into $A_1$ and $A_2$. We find the following result about the CCNR negativity that agrees with Eq.~\ref{eq:CCNREntropyRelation}. 
\begin{theorem}\label{thm:CCNRNegStabilizerState}
	For a stabilizer state, $\mc E_{\rm CCNR}(A_1,A_2)=(S^{(m)}_{A_1}+S^{(m)}_{A_2})/2-S^{(m)}_A$ for any $m$. 
\end{theorem}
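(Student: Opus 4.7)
The plan is to find the operator Schmidt decomposition of $\rho_A$ across the $A_1|A_2$ bipartition and then use $\mc E_{\rm CCNR}=\log\sum_\alpha \sigma_\alpha$, where $\sigma_\alpha$ are the operator-Schmidt coefficients (equivalently, the singular values of $R_\rho$). Combined with the recalled stabilizer entropy formula $S^{(m)}_X = N_X - k_X$, this should directly yield the claimed identity.

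First I would introduce $G_{A_1}, G_{A_2}\subset G_A$, the subgroups of stabilizers supported entirely on $A_1$ and entirely on $A_2$; because $-I\notin G$, their intersection is trivial, so $H:=G_{A_1}G_{A_2}$ has order $|G_{A_1}|\,|G_{A_2}|$. Choose a transversal $G_R$ for $G_A/H$ (so $|G_R| = |G_A|/(|G_{A_1}|\,|G_{A_2}|)$) and factor each stabilizer uniquely as $h_1 h_2 g_c$. Using abelianness and the tensor structure $g_c = \eta_{g_c}\, g_{c,1}\otimes g_{c,2}$, the sum $\rho_A = 2^{-N_A}\sum_{g\in G_A} g$ reorganizes into
\begin{equation*}
    \rho_A = \frac{|G_{A_1}|\,|G_{A_2}|}{2^{N_A}}\sum_{g_c\in G_R} \eta_{g_c}\, (g_{c,1}\Pi_{A_1})\otimes (g_{c,2}\Pi_{A_2}),
\end{equation*}
where $\Pi_{A_i}=|G_{A_i}|^{-1}\sum_{h\in G_{A_i}} h$ is the stabilizer projector on $A_i$.

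The next step is to recognize this as the operator Schmidt decomposition by showing that distinct $g_c\ne g_c'$ produce HS-orthogonal factors on each side. The HS inner product reduces to $\mathrm{Tr}(g_{c,1} g'_{c,1}\Pi_{A_1})$, which is nonzero only when $g_{c,1}\,g'_{c,1}$ agrees up to a sign with the $A_1$-part of some element of $G_{A_1}$; that would force $g_c(g_c')^{-1}\in H$, contradicting the coset-representative property. A short calculation then gives $\|g_{c,i}\Pi_{A_i}\|_{\rm HS}^2 = 2^{N_{A_i}}/|G_{A_i}|$ independent of $g_c$, so all $|G_R|$ nonzero operator-Schmidt singular values are equal to $\sqrt{|G_{A_1}|\,|G_{A_2}|/2^{N_A}}$.

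Summing these equal singular values yields $\|R_\rho\|_1 = |G_A|/\sqrt{|G_{A_1}|\,|G_{A_2}|\cdot 2^{N_A}}$, whose base-$2$ logarithm is $k_A - (k_{A_1}+k_{A_2})/2 - N_A/2$; plugging in $S^{(m)}_X = N_X - k_X$ reproduces $(S^{(m)}_{A_1}+S^{(m)}_{A_2})/2 - S^{(m)}_A$ for any Renyi index $m$. The main subtlety I would be careful with is the bookkeeping of $\pm 1$ signs arising from multiplying signed Paulis drawn from $G_{A_1}$, $G_{A_2}$, and $G_R$; because $-I\notin G$ prevents distinct group elements from collapsing, these signs only affect phases of individual Schmidt components and drop out of the HS-norm and singular-value counts that enter the final answer.
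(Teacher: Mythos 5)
Your proof is correct, but it takes a genuinely different route from the paper's. The paper computes $\Tr[(R_\rho^\dagger R_\rho)^{n/2}]$ for even $n$ by expanding $\rho_A$ as a sum over $n$-tuples of stabilizers, imposing the consecutive-proportionality constraints $g_{2i-1}g_{2i}\in G_{A_2}$, $g_{2i}g_{2i+1}\in G_{A_1}$, counting independent summation variables, and then analytically continuing $n\rightarrow 1$. You instead exhibit the operator Schmidt decomposition of $\rho_A$ directly: the coset decomposition $G_A=G_{A_1}G_{A_2}G_R$ packages $\rho_A$ into $|G_A|/(|G_{A_1}||G_{A_2}|)$ mutually HS-orthogonal product terms of equal norm, so $R_\rho$ has a flat singular spectrum and $\norm{R_\rho}_1$ is read off without any replica limit. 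Your orthogonality step is sound — if $g_{c,1}g'_{c,1}$ matches the $A_1$-restriction of an element of $G_{A_1}$ up to a phase, then $g_cg'_c$ times that element is a stabilizer acting as identity on $A_1$, hence lies in $G_{A_2}$, forcing $g_c(g'_c)^{-1}\in G_{A_1}G_{A_2}$ and contradicting distinctness of cosets — and the commutation of $g_{c,1}$ with $\Pi_{A_1}$ (needed so the ordering in $g_{c,1}\Pi_{A_1}$ is immaterial) follows from abelianness of $G_A$. What your approach buys is stronger information: the full (flat) singular value spectrum of $R_\rho$, which immediately yields every Rényi generalization $\log\Tr[(R_\rho^\dagger R_\rho)^{n/2}]$ exactly rather than only the analytically continued $n\rightarrow 1$ value, and it sidesteps the continuation from even $n$ entirely. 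What the paper's replica route buys is methodological uniformity with its Theorem~\ref{thm:PTNegStabilizerState} on the PT negativity and with the field-theoretic surgery computations, where the replica structure is essential.
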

\begin{proof}
For each $g\in G_A$, we can write $g=O^{A_1}\otimes O^{A_2}$ such that $O^{A_i}$ acts on subsystem $A_i$ and is a Pauli operator up to a sign. We then have
\begin{align}
\Tr\left[ (R_\rho^\dagger R_\rho)^{\frac{n}{2}} \right]=&\frac{1}{2^{n N_A}}\sum_{g_1,\cdots,g_{n}\in G_A}\nonumber\\
&\prod_{i=1}^{n/2}\Tr(O^{A_1}_{2i-1}O^{A_1}_{2i})\Tr(O^{A_2}_{2i}O^{A_2}_{2i+1})  
\label{eq:StabilizerCCNRReplica}
\end{align}
where $g_i=O^{A_1}_i\otimes O^{A_2}_i$. The summand above is nonzero if and only if $O^{A_1}_{2i-1}\propto O^{A_1}_{2i}$ and $O^{A_2}_{2i}\propto O^{A_2}_{2i+1}$ for all $i$ (identifying $n+1$ and $1$). We then only need to consider the case where $y_{2i-1,2i}:=g_{2i-1}g_{2i}\in G_{A_2}$ and $x_{2i,2i+1}:=g_{2i}g_{2i+1}\in G_{A_1}$. The summation over $g_1,\cdots,g_{n}$ in Eq.\,\ref{eq:StabilizerCCNRReplica} can then be replaced with a summation over $g_1$, $x_{2i,2i+1}$, and $y_{2i-1,2i}$. Note that the $x$'s and $y$'s are not all independent: Since $g_2=g_1y_{12},g_3=g_2x_{23},\cdots$, we have $g_1=g_1 y_{12}x_{23}y_{34}\cdots y_{n-1,n}x_{n,1}$, and thus 
\begin{align}
y_{12}y_{34}\cdots y_{n-1,n}=1,\quad x_{23}x_{45}\cdots x_{n,1}=1. 
\end{align}
Let $\sum'$ be the sum over the independent variables $g_1\in G_A$, $x_{23},\cdots,x_{n-2,n-1} \in G_{A_1}$, and $y_{12},\cdots,y_{n-3,n-2} \in G_{A_2}$. We have
\begin{align}
\Tr\left[ (R_\rho^\dagger R_\rho)^{\frac{n}{2}} \right]&=\frac{1}{2^{n N_A}}\sum\nolimits' \Tr_{A_1}(1)^{\frac{n}{2}}\Tr_{A_2}(1)^{\frac{n}{2}}\nonumber\\
&=2^{k+(\frac{n}{2}-1)(k_1+k_2)-\frac{n}{2}N_A},  
\end{align}
where $k_1$ and $k_2$ are the number of generators of $G_{A_1}$ and $G_{A_2}$, respectively. Taking the limit $n\rightarrow 1$, we obtain the CCNR negativity: 
\begin{align}
\mc E_{\rm CCNR}&=k-\frac{1}{2}(k_1+k_2)-\frac{1}{2}N_A\nonumber\\
&=\frac{1}{2}(S_{A_1}^{(m)}+S_{A_2}^{(m)})-S_A^{(m)}. 
\end{align}
\end{proof}

To compute the PT negativity, we need more structural knowledge about the stabilizer group. Let $k_i$ be the rank of $G_{A_i}$. We divide $G_A$ into a product $G_A=G_{A_1}G_{A_2}G_{12}$, where $G_{12}$ has $k-k_1-k_2$ number of generators independent from those of $G_{A_1}$ and $G_{A_2}$ \footnote{The choice of $G_{12}$ is in general not unique. }. Denote by $p_{A_2}(\cdot)$ the restriction of a stabilizer to subsystem $A_2$, which is well defined up to a sign if we require the corresponding restriction to be Hermitian. The following result about $G_{12}$ has been proved in Ref.\,\onlinecite{Chuang2004StabilizerEE} (in particular their Lemma 2), and we also spell out the proof for completeness. 
\begin{lemma}
We have the following canonical choice of the generators of $G_{12}$: 
\begin{align}
G_{12}=\ex{z_1,\cdots,z_r,w_1,\cdots,w_s,\bar w_1,\cdots,\bar w_s}, 
\label{eq:G12Structure}
\end{align}
such that $p_{A_2}(z_i)$ commutes with all generators of $G_{12}$ (therefore commutes with $G_A$), and $p_{A_2}(w_i)$ commutes with all generators of $G_{12}$ except for $\bar w_i$ (therefore the same holds if $w_i$ and $\bar w_i$ are exchanged). 
\end{lemma}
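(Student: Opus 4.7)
The strategy is to translate the commutation structure of $p_{A_2}$ on $G_{12}$ into an alternating bilinear form over $\mathbb{F}_2$, and then invoke the standard symplectic basis theorem to obtain the canonical generators. The lemma is essentially a statement about symplectic vector spaces over $\mathbb{F}_2$ rephrased in stabilizer language.

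First, I would view $G_{12}$, modulo its $\pm 1$ signs, as an $\mathbb{F}_2$-vector space $V$ of dimension $m:=k-k_1-k_2$, with group multiplication playing the role of vector addition. I would then define $B:V\times V\to\mathbb{F}_2$ by setting $B(g,g')=0$ if $p_{A_2}(g)$ and $p_{A_2}(g')$ commute and $B(g,g')=1$ if they anticommute; any two Pauli strings do exactly one of these. A short routine check shows that $B$ is well-defined (the commutation type of two Pauli operators is invariant under their sign choice), alternating ($B(g,g)=0$ tautologically), and bilinear (because the commutation sign of two Pauli strings is multiplicative under taking operator products, which is addition modulo $2$).

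Next, I would apply the standard structure theorem for alternating bilinear forms over $\mathbb{F}_2$: any such $(V,B)$ admits a symplectic basis $\{z_1,\dots,z_r,w_1,\bar w_1,\dots,w_s,\bar w_s\}$ with $r+2s=m$, such that $B(z_i,\cdot)=0$ for every $i$, $B(w_i,\bar w_j)=\delta_{ij}$, and $B(w_i,w_j)=B(\bar w_i,\bar w_j)=0$. The proof is the usual symplectic Gram--Schmidt procedure: pick any nonzero element; if it lies in the radical of $B$, set it aside as a $z$; otherwise find a partner to form a hyperbolic pair $(w,\bar w)$; then quotient by the span of the chosen elements and iterate.

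Finally, I would lift this basis back to $G_{12}$ by choosing any Hermitian representatives and declaring them the new generators of $G_{12}$. Since the symplectic basis differs from the original generating set by an invertible $\mathbb{F}_2$-linear change of basis, i.e., by taking appropriate operator products, these representatives remain a valid set of $m$ independent generators for $G_{12}$, and their $A_2$ commutation relations on the nose realize those claimed in the lemma. The only real subtlety---the ``main obstacle'' such as it is---is the bookkeeping around signs and the equivalence of ``$\mathbb{F}_2$-linear change of basis on the generators'' with ``choice of new generating set for $G_{12}$''; once that is in place, the result is routine symplectic linear algebra.
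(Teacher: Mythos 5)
Your proposal is correct. The identification of the $A_2$-restricted commutation relations on $G_{12}\cong\mathbb{F}_2^{\,k-k_1-k_2}$ with an alternating bilinear form $B$, followed by the symplectic basis theorem, proves exactly the claimed canonical form: the $z_i$ span the radical of $B$ and the $(w_j,\bar w_j)$ are hyperbolic pairs, and your remarks on well-definedness (signs do not affect commutation type), bilinearity (multiplicativity of the commutation phase), and the lifting of an invertible $\mathbb{F}_2$-change of basis to a new independent generating set of $G_{12}$ cover the only genuine subtleties. The paper proves the same linear-algebra fact but packages it differently: instead of iteratively extracting hyperbolic pairs and quotienting, it starts top-down from a \emph{maximal} subgroup $C\subset G_{12}$ whose $A_2$-restrictions mutually commute (a maximal totally isotropic subspace for your $B$), shows each generator of a complement $\bar C$ can be arranged to anticommute with exactly one generator of $C$, bounds $\mathrm{rank}(\bar C)\le\mathrm{rank}(C)$ to get the split $r+s$ versus $s$, and then recursively multiplies the $\bar c_k$ by elements of $C$ to make their $A_2$-restrictions mutually commute. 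Your route is shorter and makes the invariance of $r$ and $s$ manifest (they are the corank and half-rank of $B$), which is convenient later when the paper argues about how $r$ changes under deformations of the tripartition; the paper's route is self-contained and stays entirely in stabilizer language, which may be preferable for readers not wanting to set up the form $B$ explicitly. Both are instances of symplectic Gram--Schmidt over $\mathbb{F}_2$.
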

\begin{proof}
Let $C$ be a maximal subgroup of $G_{12}$ such that elements of $p_{A_2}(C)$ all commute. $C$ will eventually be identified with the subgroup generated by $z_i$'s and $w_j$'s.  
Denote by $\{c_j\}$ a set of generators of $C$. We can write $G_{12}=C\bar C$ for another subgroup $\bar C$ generated by $\{\bar c_k\}$ which are independent from $\{c_j\}$. Each $p_{A_2}(\bar c_k)$ must anticommute with some $c_j$, otherwise $\bar c_k$ can be added to $C$, violating the assumption that $C$ is maximal. Moreover, $p_{A_2}(\bar c_{k_1})$ and $p_{A_2}(\bar c_{k_2})$ with $k_1\neq k_2$ cannot anticommute with the same $c_j$, otherwise $\bar c_{k_1}\bar c_{k_2}$ can be added to $C$. As a consequence, the rank of $C$ is no less than that of $\bar C$. We will denote ${\rm rank}(\bar C)=s$ and ${\rm rank}(C)=s+r$. 
Now, we can reorganize the generators of $C$, by multiplying several $c_j$ together or reordering them, such that $p_{A_2}(\bar c_k)$ anticommutes only with $c_k$ and commutes with all other $c_{j\neq k}$. To satisfy the claim in Eq.\,\ref{eq:G12Structure}, we also want elements of $p_{A_2}(\bar C)$ to all commute. This can be achieved by recursively redefine the generators $\bar c_k$ as follows: Suppose $p_{A_2}(\bar c_k)$ for $k=1,2,\cdots,\kappa$ all commute with each other, but $p_{A_2}(\bar c_{\kappa+1})$ anticommutes with $\bar c_{k_1},\cdots,\bar c_{k_l}$ with $k_1,\cdots,k_l\leq \kappa$. We can redefine $\bar c_{\kappa+1}$ by multiplying it with $c_{k_1}c_{k_2}\cdots c_{k_l}$. The new $\bar c_{\kappa+1}$ then satisfies the property that $p_{A_2}(\bar c_{\kappa+1})$ commutes with all $\bar c_k$ with $k\leq \kappa$. The commuting/anticommuting properties between $p_{A_2}(\bar c_{\kappa+1})$ and $c_j$ are unaltered by this redefinition because $p_{A_2}(c_j)$ all commute. Therefore, we can repeat this process until elements of $p_{A_2}(\bar C)$ all commute. 
We have proved the claim, with the identifications $z_i=c_{s+i}$, $w_j=c_j$, and $\bar w_k=\bar c_k$. 
\end{proof}
With Eq.\,\ref{eq:G12Structure} in mind, the PT negativity of a stabilizer state can be computed. 
\begin{theorem}\label{thm:PTNegStabilizerState}
	For a stabilizer state, $\mc E_{\rm PT}(A_1,A_2)=s=[I^{(m)}(A_1,A_2)-r]/2$ for any $m$. 
\end{theorem}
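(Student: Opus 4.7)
Because stabilizer entropies are Renyi-$m$ independent, $I^{(m)}(A_1,A_2)=k-k_1-k_2=r+2s$ and hence $[I^{(m)}(A_1,A_2)-r]/2=s$ for all $m$; the theorem thus reduces to showing $\mc E_{\rm PT}=s$. My plan is to diagonalize $\rho_A^{T_2}$ directly rather than using a replica trick, by exploiting the canonical structure of $G_{12}$ supplied by the preceding lemma.

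The first step is to factor the partial transpose. Writing each $g\in G_A$ uniquely as $g=uvw$ with $u\in G_{A_1}$, $v\in G_{A_2}$, $w\in G_{12}$, noting that $T_2$ leaves $u$ fixed and sends $v\mapsto v^T$, and using that $u$, $v^T$, $w^{T_2}$ mutually commute, I would arrive at
\begin{align}
\rho_A^{T_2}=\frac{2^{k_1+k_2}}{2^{N_A}}\,\Pi_{A_1}\otimes \Pi_{A_2}^T\cdot\!\!\sum_{g\in G_{12}}g^{T_2},
\end{align}
where $\Pi_{A_i}=|G_{A_i}|^{-1}\sum_{g\in G_{A_i}}g$ and $\Pi_{A_2}^T$ is the ordinary transpose of $\Pi_{A_2}$ (still a projector of the same rank). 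Second, I would evaluate the remaining sum. The central identity is $(gh)^{T_2}=-g^{T_2}h^{T_2}$ whenever $p_{A_2}(g),p_{A_2}(h)$ anticommute---and by the lemma this occurs in $G_{12}$ \emph{only} within each pair $(w_j,\bar w_j)$. Expanding $g\in G_{12}$ in the canonical form and applying this identity once per activated pair gives
\begin{align}
\sum_{g\in G_{12}}g^{T_2}=\prod_{i=1}^r(I+\tilde z_i)\prod_{j=1}^s M_j,
\end{align}
where $\tilde x:=x^{T_2}$ and $M_j:=I+\tilde w_j+\tilde{\bar w}_j-\tilde w_j\tilde{\bar w}_j$. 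The minus sign in $M_j$ is what makes PT negativity sensitive to the quantum entanglement carried by each $(w_j,\bar w_j)$ pair and blind to the classical $z_i$.

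The third step is the spectral computation. Since partial transpose preserves Pauli commutation, $\tilde z_i,\tilde w_j,\tilde{\bar w}_j$ are mutually commuting Hermitian Pauli observables with squares $I$, all commuting with $\Pi_{A_1}\otimes\Pi_{A_2}^T$. The projector $\mathcal{P}:=\Pi_{A_1}\otimes\Pi_{A_2}^T\cdot\prod_i\tfrac{I+\tilde z_i}{2}$ then has rank $2^{N_A-k_1-k_2-r}$, and the $2s$ operators $\tilde w_j,\tilde{\bar w}_j$ split its image into $2^{2s}$ joint eigenspaces each of dimension $2^{N_A-k}$. On each such eigenspace $M_j=1+\epsilon_1+\epsilon_2-\epsilon_1\epsilon_2\in\{2,2,2,-2\}$ (for $\epsilon_{1,2}\in\{\pm1\}$), so $\prod_j M_j=\pm 2^s$, and summing absolute eigenvalues yields $\norm{\rho_A^{T_2}}_1=\tfrac{2^{k_1+k_2+r}}{2^{N_A}}\cdot 2^{2s}\cdot 2^s\cdot 2^{N_A-k}=2^s$, hence $\mc E_{\rm PT}=s$.

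The hardest part is the sign bookkeeping in the second step: the $-1$ in $M_j$ must be isolated exactly to the $s$ anticommuting pairs supplied by the lemma and must not infect the $z_i$, $G_{A_1}$, or $G_{A_2}$ contributions. One also has to check that $\tilde w_j,\tilde{\bar w}_j$ remain independent observables on the image of $\mathcal{P}$, which follows because $w_j,\bar w_j$ are chosen independent of $G_{A_1}G_{A_2}\langle z_i\rangle$ as generators of $G_A$.
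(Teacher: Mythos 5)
Your proof is correct, and it takes a genuinely different route from the paper. The paper computes $\Tr[(\rho_A^{T_2})^n]$ for even $n$ by a replica sum over $n$-tuples of group elements, tracks the partial-transpose signs through a parity argument on the $\mu,\nu$ indices, and then analytically continues $n\to 1$. You instead diagonalize $\rho_A^{T_2}$ outright: the factorization $\rho_A^{T_2}\propto \Pi_{A_1}\otimes\Pi_{A_2}^T\prod_i(I+\tilde z_i)\prod_j M_j$ is valid (the key sign facts are that $p_{A_2}(u)=I$ for $u\in G_{A_1}$, that $p_{A_2}(v)$ commutes with $p_{A_2}(w)$ because $v,w$ commute and $v$ is supported on $A_2$, and that anticommutation on $A_2$ occurs only within the pairs $(w_j,\bar w_j)$), partial transposition preserves commutativity of commuting Paulis so the $\tilde z_i,\tilde w_j,\tilde{\bar w}_j$ are simultaneously diagonalizable involutions, and the independence of the tilded generators follows since any relation $\prod(\cdot)=\pm I$ would pull back through $T_2$ to a relation among independent generators of $G_A$. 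Your eigenvalue count then gives $\norm{\rho_A^{T_2}}_1=2^s$ directly, and the identity $I^{(m)}=k-k_1-k_2=r+2s$ (immediate from $S^{(m)}_P=N_P-k_P$) finishes the statement. What your approach buys: it sidesteps the analytic continuation from even $n$, which is the least rigorous step of the paper's argument, and it yields the full spectrum of $\rho_A^{T_2}$ (eigenvalues $\pm 2^{k-s-N_A}$ with computable multiplicities), hence all Rényi negativities for free. What the paper's approach buys is uniformity: the same replica machinery is used for the CCNR negativity, where no comparably clean spectral decomposition of $R_\rho^\dagger R_\rho$ is available, so the two theorems are proved in parallel. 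The only points you should spell out more carefully are the rank claims for $\mathcal P$ and for the joint eigenspaces, both of which reduce to the independence argument you sketch in your final paragraph.
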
 
\begin{proof}
We can write the reduced density matrix $\rho_A$ as 
\begin{align}
\rho_A
=\frac{1}{2^{N_A}}\sum_{g\in G_{A_1}}\sum_{g'\in G_{A_2}}\sum_{h\in G_{12}}gg'h. 
\end{align}
Let $U$ and $V$ be two stabilizers, we generally have $(UV)^{T_2}=\pm U^{T_2}V^{T_2}$ when $p_{A_2}(U)$ and $p_{A_2}(V)$ commute (anticommute). We therefore have 
\begin{align}
&\Tr\left[ (\rho_A^{T_2})^{n} \right]=\frac{1}{2^{n N_A}}\sum_{\{g_i,g_i',h_i\}}\nonumber\\
&\quad\quad\Tr\left[ (g_1\cdots g_n)(g_1'\cdots g_n')^{T_2}(h_1^{T_2}\cdots h_n^{T_2}) \right], 
\end{align} 
where $g_i\in G_{A_1}$, $g'_i\in G_{A_2}$, and $h_i\in G_{12}$. 
The sums over $g$ and $g'$ can now be performed with the constraints $g_1\cdots g_n=g_1'\cdots g_n'=1$, and we obtain
\begin{align}
\Tr\left[ (\rho_A^{T_2})^{n} \right]=\frac{2^{(n-1)(k_1+k_2)}}{2^{n N_A}}\sum_{ \{  h_i\} }\Tr(h_1^{T_2}\cdots h_n^{T_2}). 
\end{align}
Since the effect of partial transpose on a stabilizer is just a $\pm$ sign, for the trace on the right-hand side to be nonzero, we must have $h_n=h_1\cdots h_{n-1}$. The difficult part is to figure out the sign of this trace. Each $h_i\in G_{12}$ can be expanded as a product of the canonical generators, and is thus represented by a collection of $\mathbb{Z}_2$ indices $(\lambda^i_1,\cdots,\lambda^i_r,\mu^i_1,\cdots,\mu^i_s,\nu^i_1,\cdots,\nu^i_s)$, corresponding to the powers of $z_i$, $w_i$, and $\bar w_i$, respectively. We define an operator $\tilde h_i$, by replacing each canonical generator in the expansion of $h_i$ by its partial transpose. For example, if $h_i=z_a w_b \bar w_c$, then $\tilde h_i=z_a^{T_2}w_b^{T_2}\bar w_c^{T_2}$. Notice that $h_1h_2\cdots h_n=1$ implies $\tilde h_1\tilde h_2\cdots \tilde h_n=1$ as well. 
One can see that $h_i^{T_2}=\tilde h_i\exp(i\pi\sum_l \mu^i_l\nu^i_l)$. The sign of $\Tr(h_1^{T_2}\cdots h_n^{T_2})$ is then given by
\begin{align}
\prod_{l=1}^s\exp\left\{ i\pi\left[ -\sum_{i=1}^{n-1}\mu^i_l\nu^i_l + \left(\sum_{i=1}^{n-1} \mu^i_l\right)\left(\sum_{j=1}^{n-1}\nu^j_l\right) \right] \right\}. 
\end{align}
The summation over $\nu^i_l$ gives nonzero result only if any $(n-2)$ elements of $\{ \mu^1_l,\cdots,\mu^{n-1}_l \}$ sum up to an even integer. For the case of even $n$, this implies that $\mu^1_l,\cdots,\mu^{n-1}_l$ are either all even or all odd. The independent variables to sum over are therefore $\mu^1_l$ for all $l$, and $\nu^i_l$ for $i=1,\cdots,n-1$ and all $l$. Also remembering that $\Tr_A(1)=2^{N_A}$, we finally arrive at the result
\begin{align}
\sum_{ \{ h_i \} }\Tr(h_1^{T_2}\cdots h_n^{T_2})=2^{ns+N_A}\quad (\text{$n$ even}). 
\end{align}
Taking the limit $n\rightarrow 1$, we obtain 
\begin{align}
\mc E_{\rm PT}&=s=\frac{1}{2}(k-k_1-k_2-r)\\
&=\frac{1}{2}I^{(m)}(A_1,A_2)-\frac{1}{2}r, 
\end{align}
where $I^{(m)}(A_1,A_2)$ is the Renyi mutual information, and $r$ is defined in Eq.\,\ref{eq:G12Structure}. 
\end{proof}
\noindent As we will see, with trisection points, the dependence of $r$ on local geometry is exactly the reason why the subleading term in $\mc E_{\rm PT}$ is nontopological. 

\section{Lattice Examples}\label{sec:LatticeExamples}
\subsection{Lattice-continuum comparison protocol}
In this section, we would like to demonstrate our results in lattice examples. In order to compare with the continuum results, we need to first precisely define the subleading term $\mc E^{\rm top}$ of each negativity $\mc E$ on lattices. We wish the term $\mc E^{\rm top}$ to be topological (insensitive to smooth deformations of the subregions) and universal (insensitive to perturbations to the Hamiltonian). 

Recall that the negativity $\mc E(A_1,A_2)$ contains area-law terms and other nontopological terms due to sharp features like corners. We need to find a way to cancel them all. To this end, consider the following linear combination. 
\begin{align}
\Delta &:=\mc E(A_1,A_2)-\mc E(A,\varnothing)\nonumber\\
&-\left[\mc E(A_1,\bar A_1)+\mc E(A_2,\bar A_2)-\mc E(A,B)\right]/2,
\label{eq:DeltaDef}
\end{align}
where $\bar{\cdot}$ means complement. 
Suppose all nontopological/nonuniversal terms in $\mc E$ are made of \emph{local} contributions (insensitive to changes far away) near the interfaces. We claim that in the \emph{absence} of trisection points, all these terms have been cancelled out in $\Delta$. In other words, the quantity $\Delta$ is topological and universal. 
To see why this is true, let us examine a concrete example shown in the top left panel of Fig.\,\ref{fig:CancellationExample}. In this particular partition of the space, $A_1$ has the topology of an annulus, and $A_2$ has the topology of two disjoint disks. We have illustrated the nontopological/nonuniversal terms by colors. Two colors are used because $A_1A_2$ interfaces and $A_iB$ interfaces can contribute in different ways to $\mc E(A_1,A_2)$. Thick lines represent area-law terms, while big dots represent corner contributions. The nonuniversal terms of the five terms in Eq.\,\ref{eq:DeltaDef} have all been illustrated in this figure, and one can see that they do cancel with each other. Such cancellation happens in general for any tripartition without trisections. However, in the presence of trisection points, the interfaces between different pairs of subregions meet together, and there may be uncanceled local contributions near the trisections. We need to study concrete examples to see whether $\Delta$ is robust in this situation, and the result turns out to be negative. 

\begin{figure}
    \centering
    \includegraphics[width=\linewidth]{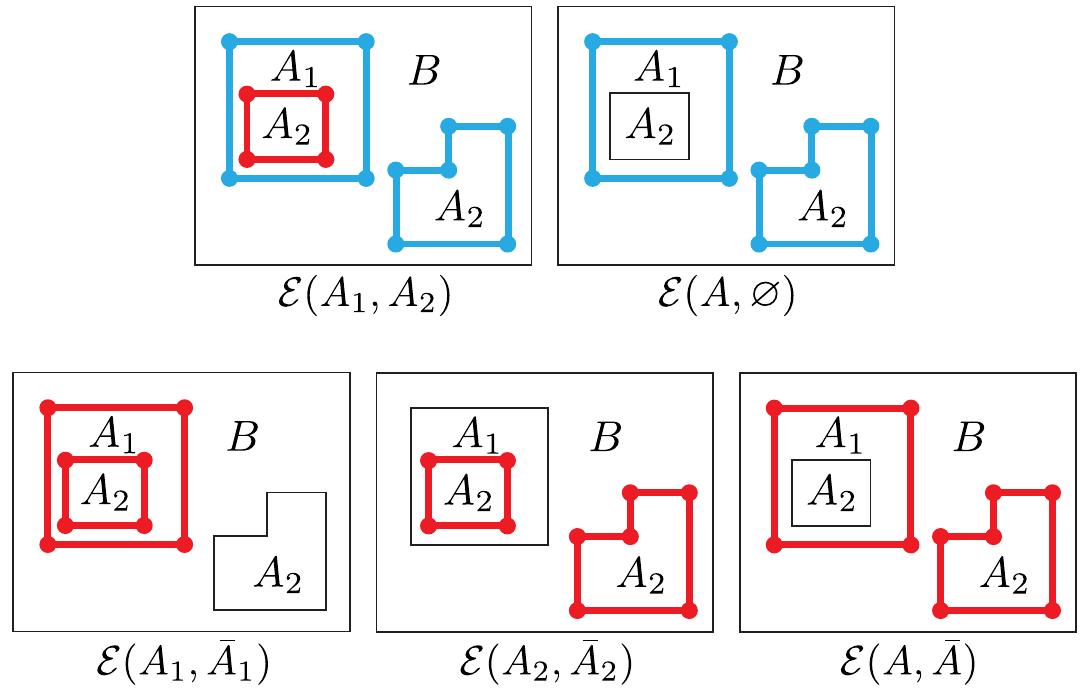}
    \caption{Nontopological/nonuniversal terms within the five terms in the expression for $\Delta$ --- a concrete example. The two colors represent nontopological/nonuniversal terms in $\mc E(A_1,A_2)$ localized at $A_1A_2$ interfaces (red) and $A_iB$ interfaces (blue). For example, the second row for $\mc E(P,\bar{P})$ do not have blue interfaces because there is no third party (or the corresponding $B=\varnothing$). }
    \label{fig:CancellationExample}
\end{figure}

We note that the above argument relies on the locality assumption about nontopological/nonuniversal terms. This assumption does not hold in certain systems with the so-called \emph{spurious} long-range entanglement \cite{spurious_cylinder16,spurious_subsystem19,spurious_nonSPT20,spurious_bound23}. We assume there are meaningful classes of systems without spurious entanglement, and in this work we only focus on such nice systems, and talk about, e.g., universality in the corresponding possibly constrained phase space. 

When the quantity $\Delta$ is robust, it can be used to give a definition of $\mc E^{\rm top}$ on lattices. We note that on the right-hand side of Eq.\,\ref{eq:DeltaDef}, all terms except for the first one are actually EEs: $\mc E_{\rm CCNR}(A,\varnothing)=-S^{(2)}_A/2$, $\mc E_{\rm PT}(A,\varnothing)=0$, and $\mc E(P,\bar P)=S^{(1/2)}_P$. Hence, given the lattice result of $\Delta$ and the (continuum or lattice) results of topological EEs, we can add a superscript ``top'' to each term on the right-hand side of Eq.\,\ref{eq:DeltaDef}, and then solve $\mc E^{\rm top}(A_1,A_2)$. More explicitly, we define $\mc E^{\rm top}$ on lattices by
\begin{align}
	&\mc E_{\rm CCNR}^{\rm top}(A_1,A_2):=\Delta_{\rm CCNR}\nonumber\\
    &\quad+\frac{1}{2}\left\{\left[S^{(1/2)}_{A_1}+S^{(1/2)}_{A_2}-S^{(1/2)}_{A}\right]-S^{(2)}_A\right\}^{\rm top}, \label{eq:CCNRLatticeDef}\\
	&\mc E_{\rm PT}^{\rm top}(A_1,A_2):=\Delta_{\rm PT}\nonumber\\
    &\quad+\frac{1}{2}\left[S^{(1/2)}_{A_1}+S^{(1/2)}_{A_2}-S^{(1/2)}_{A}\right]^{\rm top}, \label{eq:PTLatticeDef}
\end{align}
where $\Delta$ should be computed on lattices according to Eq.\,\ref{eq:DeltaDef}. In this work, for simplicity, we will just use the continuum results of $S^{\rm top}$ in the above equations. This means we are effectively comparing the continuum and lattice results for $\Delta$. The readers may instead wish to use lattice definitions of topological EEs. For this purpose, we note that there already exist lattice definitions of topological EEs for disk and annulus regions \cite{Kitaev2006TEE,Levin2006TEE,Brown2013TEETwist}, and then it should not be hard to find working definitions for more complicated regions, because each region with a complicated topology can always be decomposed into simpler ones. We will not go into more details in this direction.

\subsection{A sphere example without trisection}
Let us consider the $\mathbb{Z}_2$ TC model \cite{Kitaev2003ToricCode}, the simplest lattice model with a topological order. The model can be defined on an arbitrary 2D closed space manifold discretized into a lattice with qubits living on links. The Hamiltonian takes the form 
\begin{align}
H_{\rm TC}=-\sum_s \mathbf{Z}_s-\sum_p \mathbf{X}_p, 
\end{align}
where $\mathbf{Z}_s$ is the product of Pauli $Z$ operators on all links containing the site $s$, and $\mathbf{X}_p$ is the product of Pauli $X$ operators on all links surrounding the plaquette $p$; see Fig.\,\ref{fig:ToricCodeExample}(a) for the example of a square lattice. For concreteness, we put the model on a cube surface (topologically a sphere) as shown in Fig.\,\ref{fig:ToricCodeExample}(b), and focus on its unique ground state. 

\begin{figure}
	\centering
	\includegraphics{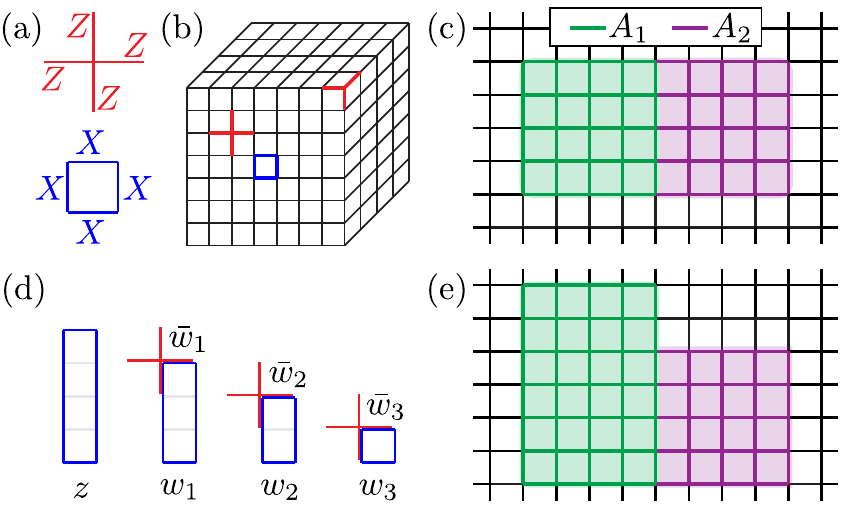}
	\caption{(a) Star and plaquette terms of the toric code model in the case of square lattice. (b) Toric code on a cube surface which is topologically a sphere. (c) Subsystems $A_1$ (green, left) and $A_2$ (purple, right). (d) Canonical generators of $G_{12}$. (e) A deformation of the tripartition in panel c. }
	\label{fig:ToricCodeExample}
\end{figure}

As the first example, consider the tripartition in Fig.\,\ref{fig:SpherePartitions}(b). We can prove $r=0$ in this case: If $r>0$, $p_{A_2}(z_1)$ commutes with $G_A$. Since $p_{A_2}(z_1)$ lies in $A_2$ and is far away from the $AB$ interface, it should then commute with all star and plaquette terms. Now suppose we combine $B$ and $A_1$ into a single region and call it the new $A_1$, the \emph{old} $z_1$ still satisfies $z_1\notin G_{A_1}G_{A_2}$ and $p_{A_2}(z_1)$ commuting with $G_A$, thus $r>0$ still holds as one can prove by contradiction using Eq.\,\ref{eq:G12Structure}, but this is not possible because $\mc E_{\rm PT}(A_1,A_2)=I^{(1/2)}(A_1,A_2)/2$ for a bipartite pure state. Hence, by comparing Theorem~\ref{thm:EntMeasureRelationsCS}, \ref{thm:CCNRNegStabilizerState}, and \ref{thm:PTNegStabilizerState}, we see that $\mc E_{\rm PT}^{\rm top}$ and $\mc E_{\rm CCNR}^{\rm top}$ computed on the lattice indeed match with the continuum results. 

\subsection{Trisection points}
Next, we still consider the ground state on the cube surface, but take $A_1$ and $A_2$ to be adjacent rectangular regions as shown in Fig.\,\ref{fig:ToricCodeExample}(c); this corresponds to the tricky scenario in Fig.\,\ref{fig:SpherePartitions}(d). Any stabilizer of the system, to commute with the star and plaquette terms, must consists of $X$ loops along links, and $Z$ loops along dual lattice links. One can then check that a set of independent generators for $G_A$ is simply all star and plaquette terms inside, similarly for $G_{A_1}$ and $G_{A_2}$. Let $L_{A_1}$, $L_{A_2}$, and $L_A$ be the lattice perimeters of $A_1$, $A_2$, and $A$, respectively. A simple counting gives $S^{(n)}_A=N_A-k=L_A-1$, similarly for $A_1$ and $A_2$. Note that the $-1$ is nothing but the topological EE of a contractible region with a smooth boundary for the TC model \cite{Kitaev2006TEE,Levin2006TEE}. 
The rank of $G_{12}$ is given by $k-k_1-k_2=2L_{12}-1=I^{(n)}(A_1,A_2)$ where $L_{12}$ is the lattice length of the interface between $A_1$ and $A_2$. 
We can choose the generators of $G_{12}$ to be the star and plaquette terms crossing the interface, but they do not satisfy the property of canonical generators introduced previously. We show in Fig.\,\ref{fig:ToricCodeExample}(d) how these generators can be reorganized into a canonical set, from which we see $r=1$. According to Theorem \ref{thm:PTNegStabilizerState}, $\mc E_{\rm PT}(A_1,A_2)=[I^{(n)}(A_1,A_2)-1]/2=L_{12}-1$. The CCNR negativity is found to be $\mc E_{\rm CCNR}(A_1,A_2)=L_{12}-L_A/2$. From our previous lattice definition of $\mc E^{\rm top}$, we obtain $\mc E_{\rm PT}^{\rm top}=-1$ and $\mc E_{\rm CCNR}^{\rm top}=0$, where the latter is consistent with the field-theoretic calculation. 

In view of Theorem~\ref{thm:CCNRNegStabilizerState}, the above result of $\mc E_{\rm CCNR}^{\rm top}$ should at least be topological \emph{for this particular model}, but $\mc E_{\rm PT}^{\rm top}$ is more subtle. Consider the deformed tripartition in Fig.\,\ref{fig:ToricCodeExample}(e). 
With one more star term included in $G_{12}$, we see $r=0$, and therefore $\mc E_{\rm PT}=I^{(n)}(A_1,A_2)/2$. It follows that $\mc E_{\rm PT}^{\rm top}$ now becomes $-1/2$, different from the previous $-1$ result. Hence, $\mc E_{\rm PT}^{\rm top}$ defined according to Eq.\,\ref{eq:DeltaDef} is actually not topological. 

\subsection{A torus example}
Finally, let us consider the $\mathbb{Z}_2$ TC model on a torus -- square lattice with periodic boundary condition, and the tripartition in Fig.\,\ref{fig:TorusPartitions}b: $A_1$ and $A_2$ are adjacent, and all interface circles are vertical. The TC Hamiltonian has four degenerate ground states, so we need to specify two more stabilizers besides the star and plaquette terms. The TC model has two independent types of string operators: the $e$ string consisting of Pauli $X$ operators along links, and the $m$ string consisting of Pauli $Z$ operators along dual lattice links. We consider the interesting situation where the additional stabilizers are \emph{horizontal} $e$ and $m$ loops that wind around the torus. Each circular interface is crossed by the $e$ or $m$ loop once. 

In a general CS theory, the state $\ket{R_0}$, defined by path integral inside the torus with no WL insertion, is an eigenstate with eigenvalue $d_a$ of the vertical string/WL operator in the representation $R_a$, as one can verify using surgery. Hence, in the TC model, $\ket{R_0}$ is the eigenstate with eigenvalue $+1$ for both the \emph{vertical} $e$ and $m$ loops. The state $\ket{\psi}$ we consider is then related to $\ket{R_0}$ by a modular $\mc S$ transformation, or more precisely $\ket{\psi}=\sum_j\psi_j\ket{R_j}$, with $\psi_j=\mc S_{0j}=1/2$ $(j=0,1,2,3)$, where we used $\mc S_{0j}=d_j/\mc D$ and $d_j=1$, $\mc D=2$. 
From our continuum results, 
\begin{align}
	\text{(continuum)}\quad\mc E_{\rm CCNR}^{\rm top}=0,\quad\mc E_{\rm PT}^{\rm top}=-1, 
\end{align}
where the base of logarithm has been set to $2$. The topological EE of $A_1$, $A_2$, or $A$ is simply zero; this is the maximally entangled state. 

Now let us get onto the lattice. From Theorems~\ref{thm:CCNRNegStabilizerState} and \ref{thm:PTNegStabilizerState}, and Eq.\,\ref{eq:DeltaDef}, we have $\Delta_{\rm CCNR}=0$, and $\Delta_{\rm PT}=-r/2$. Since all topological EEs are zero for the state being considered, we see that the lattice result of $\mc E_{\rm CCNR}^{\rm top}$ matches with the continuum one. The lattice result of $\mc E_{\rm PT}^{\rm top}$ is $-r/2$. It is not hard to see $r=2$ in this case: We can take $z_1$ ($z_2$) to be the product of two vertical $e$ ($m$) loops inside $A_1$ and $A_2$, respectively. A pair of vertical $e$ ($m$) loops can be generated by the plaquette (star) terms in between and is indeed a valid stabilizer, but a single vertical $e$ ($m$) loop anticommutes with the horizontal $m$ ($e$) loop and does not belong to the stabilizer group. We have thus also verified $\mc E_{\rm PT}^{\rm top}=-1$ on the lattice. 

\section{Discussions}\label{sec:Discussions}
In this work, we have studied bipartite mixed-state quantum entanglement in topologically ordered states, quantified by PT and CCNR negativities. General results are obtained both in the continuum (CS theory) and in lattice models.
On lattices, We have examined whether the negativities are topological and universal. In particular, for the tricky scenario with trisection points, the subleading piece of the PT negativity is not topological, in contrast to CCNR. 

There are a number of possible future directions: (1) It remains to be checked whether our result of $\mc E^{\rm top}_{\rm CCNR}$ with trisection points has any universal meaning on lattices. 
(2) Our general approaches both in continuum and on lattices may be extended to study other interesting entanglement quantities, such as the reflected entropy \cite{Dutta2021ReflectedEntropy,Chen2021ReflectedEntropy,Zaletel2022MarkovGap,Ryu2022VertexStates}. (3) The negativity formulas in the stabilizer formalism can be readily applied to higher-dimensional topological orders and to other quantum systems, such as random quantum circuits. (4) It is also interesting to consider more general string-net models. 

\textit{Acknowledgments. }---
We are grateful to Yuhan Liu for a stimulating discussion, and to Jonah Kudler-Flam, Shinsei Ryu, and Ashvin Vishwanath for valuable feedbacks on our manuscript. S. L. is supported by the Gordon and Betty Moore Foundation under Grant No. GBMF8690 and the National Science Foundation under Grant No. NSF PHY-1748958. 

\textit{Note added. }---
Upon completion of this manuscript, we became aware of a few related works: The use of PT negativity for diagnosing error-corrupted topological orders is discussed in Ref.\,\onlinecite{Fan2023TODecoherence}. Refs.\,\onlinecite{Ryu2023Diagrammatic} and \onlinecite{Ryu2023Trisection} both have continuum computations of PT negativity with trisection points. The former takes an anyon diagram approach with a particular regularization of the trisection points (wormholes), and the latter takes a boundary/vertex state approach. 

\appendix*

\section{Entanglement Measures 
in Chern-Simons Theory}

\subsection{Overview}
Let $\cM$ be an orientable closed 2D manifold with genus $g\ge 0$.
Consider a Chern-Simons (CS) theory on $\cM$ and its specific state $\ket{\psi}$, defined by the path integral in the 3D interior $\cM_3\supset \cM$ of $\cM$ with Wilson lines (WLs) inserted. The goal of this Appendix is to calculate entanglement measures of $\ket{\psi}$ with respect to a partition of $\cM$ under fairly general assumptions. Comparing to the main text, this Appendix contains a few different notations, and can be read independently. In this overview subsection we set up the problem, present our main results, and finally review the surgery method that we will use.

\subsubsection{Tripartition: Ball-tube system}\label{sec:ball-tube}

Consider a tripartition $\cM = A_1\cup A_2\cup B$ that separates $\cM$ into regions $\{r_j:j = 1,\cdots, N\}$, where each region $r_j$ belongs to either party $A_1,A_2$ or $B$. This reduces to a bipartition $\cM = A\cup B$ when combining $A = A_1\cup A_2$. We make an assumption on the tripartition:
\begin{asmp}\label{asmp}
    The edges of the regions $\{r_j\}$ are disjoint circles that are contractable in $\cM_3$.
\end{asmp}
In particular, this implies that no three different regions $i,j,k$ share common points: \begin{equation}\label{eq:3no_touch}
    r_i\cap r_j \cap r_k = \varnothing.
\end{equation}
Any pair of regions $(r_i,r_j)$ either is not connected, or shares $\ge 1$ disconnected edges. All edges do not touch each other according to \eqref{eq:3no_touch}. 
Beyond \eqref{eq:3no_touch}, Assumption \ref{asmp} also requires contractability of edges in the bulk. For example, Fig.~\ref{fig:tripart}(a) and any tripartition on the sphere that satisfies \eqref{eq:3no_touch} fulfil Assumption \ref{asmp}, while Fig.~\ref{fig:tripart}(b) does not: the edge circles between $B$ and $A_1$ are non-contractable. Note that tripartitions like Fig.~\ref{fig:tripart}(b) can be transformed using modular transformation to a ``good'' tripartition satisfying Assumption \ref{asmp}, with suitable Wilson loops inserted \cite{Witten1989CSTheory,Wen2016NegBdry}. However, we leave as an open question whether any tripartition satisfying \eqref{eq:3no_touch} can be transformed in this way, and just impose Assumption \ref{asmp} throughout.

Based on Assumption \ref{asmp}, $\cM_3$ can be deformed to a ball-tube system, with Fig.~\ref{fig:tripart}(a) as an example. In the ball-tube system, each region $r_j$ on the surface corresponds to an interior 3D region $\cR_j$, composed of several balls and half-balls connected by $D^2$-tubes. Here a (half-)ball is topologically $D^3$ while a $D^2$-tube is $D^2\times [0, 1]$. To connect regions, for each edge between $r_i$ and $r_j$, the corresponding half-ball of $\cR_i$ combines with its counterpart half-ball of $\cR_j$ to make a ball. We refer to this as an \emph{edge ball} to differentiate with the \emph{party balls} that belong to only one party, and we call the joint surface between the two half-balls an \emph{interface}. The boundary of the interface is then an edge that connects $r_i$ and $r_j$. In this way $\cM_3$ is deformed to a set $\cB$ of balls connected by tubes.

We provide further justification on why the above deformation is possible. Namely, one can start from the unpartitioned $\cM_3$ and add the edges one by one. The initial $\cM_3$ is naturally a ball-tube system, where a central ball is connected to $g$ surrounding balls by two $D^2$-tubes each. When adding the first edge that is contractable in $\cM_3$, one of the balls will be cut by an interface to become an edge ball. We do not need to make the cut on tubes because it can always deform to one ball on its end. Then adding the second edge, there are two possibilities: if the cut is on a party ball then it becomes an edge ball just as step one; otherwise if the cut is on an edge ball, we deform the ball to two edge balls connected by one new tube. The manifold thus remains to be a ball-tube system, so that edges can be added one by one.

There is some freedom in the above procedure, namely for each $\cR_j$, which pairs of (half-)balls are connected by tubes? The entanglement quantities we study should not depend on the choice. Therefore, to express our results, we need to use quantities that do not depend on the connection. Define $E$ to be the total number of edges, which is just the number of edge balls independent of the connection. Moreover, define $E_P$ ($P = A_1,A_2,B$) to be the number of edges with party $P$ on one side. We will define similar quantities for WL in the next subsubsection.

\begin{figure}
    \centering
    \includegraphics[width=0.48\textwidth]{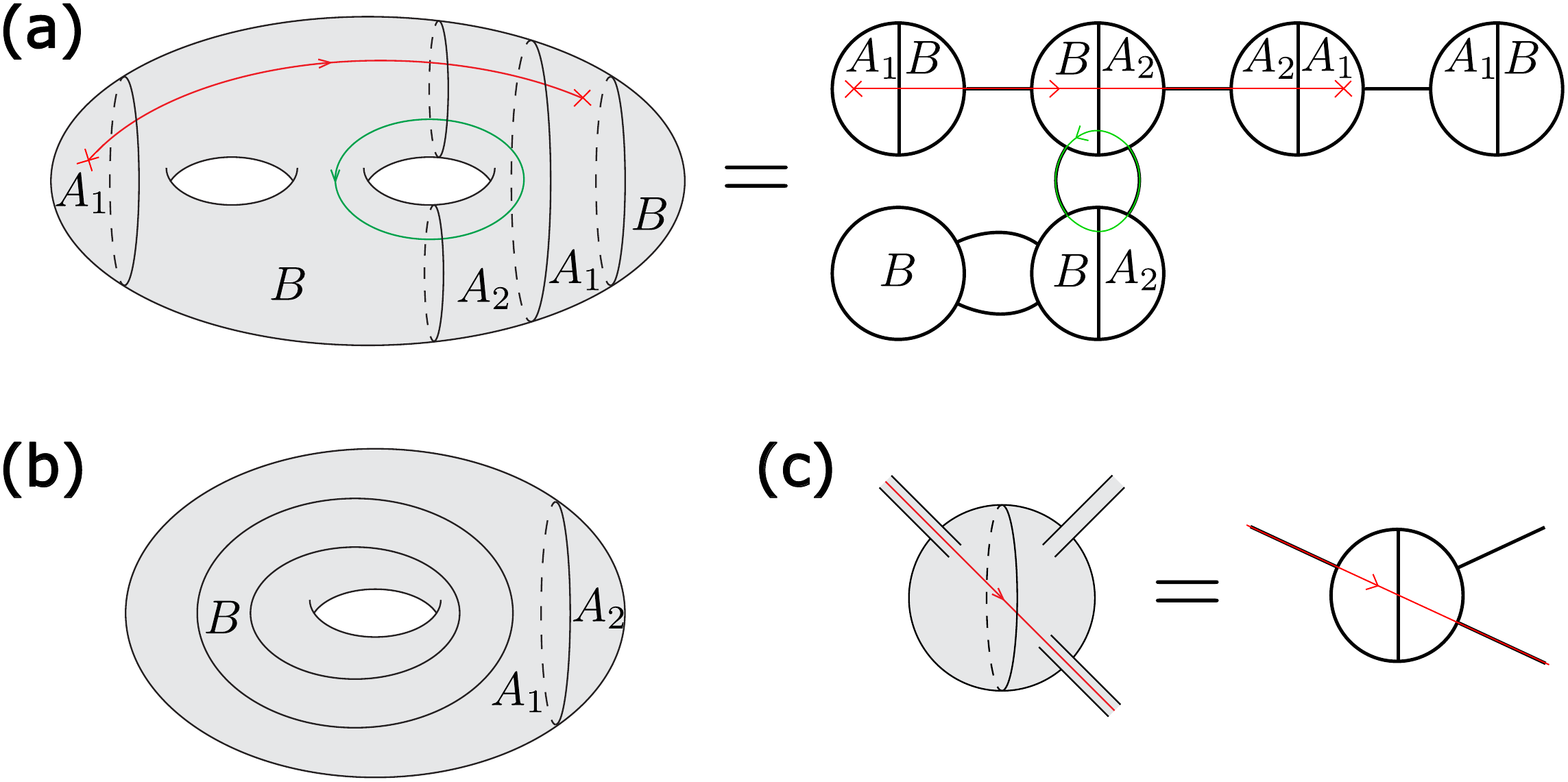}
    \caption{\label{fig:tripart} (a) A tripartition of a genus-$2$ surface $\cM$ that satisfies Assumption \ref{asmp}, which is deformed to the right ball-tube system. The relevant quantities are $E=5, E_B=4, K(1)=3, K_B(1)=2, K_{12}(1)=K_{1B}(1)=K_{2B}(1)=1, K(2)=K_{2B}(2)=2, K_{12}(2)=0$. (b) A tripartition that violates Assumption \ref{asmp}, which cannot be deformed to a ball-tube system. (c) The 3D version of an example element in the sketched ball-tube system (a). In particular, a black line represents a tube with cross section $D^2$, and a red line is a WL. }
\end{figure}

\subsubsection{Assumption on WL}
Let $W$ WLs be embedded in $\cM_3$.
A WL can either be a loop in the interior, or puncture the surface $\cM$ with two anyons. To avoid braiding and knot, we consider the following WL configuration in the ball-tube system $\cM_3$: \begin{asmp}\label{asmpW}
    The WLs can be simultaneously deformed, without touching among themselves or between WL endpoints and edges on the surface $\cM$, to a configuration that all WLs are contained in the surface $\cM$ and do not touch each other. Moreover, each $D^2$-tube contains at most one WL which goes through the tube once.
\end{asmp} 
The deformation does not change topology, and thus does not change the quantities that we will compute. Moreover, contractable (in $\cM_3$) Wilson loops with no flux of other WLs can be annihilated freely, as one can check explicitly using surgery introduced in Section \ref{sec:surgery}. Note that even if a tube contains more than one WLs in one ball-tube system representation of $\cM_3$, one may adjust the connection of balls to get another ball-tube system that satisfies Assumption \ref{asmpW}. From now on we assume such a ball-tube system is chosen. Similar to $E$, the number of interfaces traversed by the WL $w$, $K(w)$, is well-defined. Note that an interface can contribute more than one to $K(w)$, if it is traversed by $w$ back and forth. Similarly, $K_P(w)$ ($K_{PP'}(w)$) counts the number of interfaces shared by party $P$ (and party $P'$) that $w$ traverses.

Each WL is assigned a representation $a$, and we consider the superposition
\begin{widetext}
\begin{equation}\label{eq:psi}
    \ket{\psi} = \sum_{a(1)\cdots a(W)}\psi_{a(1)}(1)\times \psi_{a(2)}(2)\times\cdots \times \psi_{a(W)}(W) \ket{a(1)\cdots a(W)}.
\end{equation}
\end{widetext}
Here $\ket{a(1)\cdots a(W)}$ is the \emph{normalized} state with each WL $w\in \{1,\cdots,W\}$ in the fixed representation $a(w)$. As we will check in Section \ref{sec:norm}, these states are orthogonal to each other. Thus \eqref{eq:psi} can be viewed as a direct product among WLs, and we assume $\sum_a \abs{\psi_a(w)}^2=1$ for any $w$ such that $\braket{\psi|\psi}=1$.

\subsubsection{Entanglement measures and the replica method}
We study three entanglement measures. First, $\ket{\psi}$ is a bipartite pure state shared by $A$ and $B$, so there is entanglement entropy (EE) from the reduced density matrix $\rho_A$ on $A$ \begin{align}
    S_A &= S_B = -\tr{ \rho_A\log \rho_A} = \lim_{n\rightarrow 1} S^{(n)}_A,
    \quad \mathrm{where}
    \nonumber\\ \quad S^{(n)}_A &= (1-n)^{-1}\log \tr{ \rho_A^n}.
\end{align}
Here $S^{(n)}_A$ is the $n$-th R\' enyi EE. In the replica method, we will compute $S^{(n)}_A$ first, and then take the replica limit $n\rightarrow 1$ to get the von Neumann EE. The same method is also applied for the following two entanglement measures.

$\rho_A$ is a bipartite mixed state shard by $A_1$ and $A_2$, and we study two quantities that measure the quantum correlation between the two parties: the
partial transpose negativity (PT) ${\mc E}_{\mathrm{PT}}$, and the computable cross norm negativity (CCNR) ${\mc E}_{\mathrm{CCNR}}$ \cite{Rudolph2005CCNR,Chen2002CCNR}. They are defined by \begin{subequations} \begin{align}
{\mc E}_{\mathrm{PT}} &= \log \norm{\rho_A^{T_2}}_1 = \lim_{\text{even } n\rightarrow 1} {\mc E}_{\mathrm{PT}}^{(n)}, \\
{\mc E}_{\mathrm{CCNR}} &= \log \norm{R_\rho}_1 = \lim_{\text{even } n\rightarrow 1} {\mc E}_{\mathrm{CCNR}}^{(n)},
\end{align}
\end{subequations}
where 
\begin{subequations} \begin{align}
{\mc E}_{\mathrm{PT}}^{(n)} &= \log\tr{\rho_A^{T_2}}^n, \\
{\mc E}_{\mathrm{CCNR}}^{(n)} &= \log \tr{R_\rho^\dagger R_\rho}^{n/2}.
\end{align}
\end{subequations}

Here $T_2$ means partial transpose on $A_2$, and $R_\rho$ is related to $\rho$ by \begin{equation}
    \bra{e_1}\bra{e'_1} R_\rho \ket{e_2}\ket{e'_2} = \bra{e_1}\bra{e_2} \rho \ket{e'_1}\ket{e'_2},
\end{equation}
where $\{\ket{e_1}\}, \{\ket{e_2}\}$ are basis for subsystem $A_1$ and $A_2$ respectively.

\subsubsection{Main result}\label{sec:app_main}
For any state \eqref{eq:psi} with Assumptions \ref{asmp} and \ref{asmpW}, we have
\begin{widetext}
\begin{subequations}\label{eq:main}
    \begin{align}
    S_A &= -E_B\log \mathcal{D} + \sum_{w} \sum_a \abs{\psi_a(w)}^2 \left(K_B(w)\log d_a- \mathbb{I}[K_B(w)>0]\log\abs{\psi_a(w)}^2\right), \label{eq:SA}\\
    {\mc E}_{\mathrm{PT}} &= -(E-E_B)\log \mathcal{D} + \sum_{w} \left\{\begin{array}{cr}
        \log\left(\sum_a \abs{\psi_a(w)}^2d_a^{K(w)-K_B(w)} \right), & \text{if } K_B(w)>0 \\
        2\log\left(\sum_a \abs{\psi_a(w)}d_a^{K(w)/2}\right), & \text{if } K(w)>K_B(w)=0
    \end{array}\right. \label{eq:ppt}\\
    {\mc E}_{\mathrm{CCNR}} &= -(E-\frac{3}{2}E_B)\log \mathcal{D} \nonumber\\
    &\quad + \sum_{w} \left\{\begin{array}{cr}
        \log\left(\sum_a \abs{\psi_a(w)}^2d_a^{K(w)-\frac{3}{2}K_B(w)} \right), & \text{if at least two in } \{K_{12}(w),K_{1B}(w),K_{2B}(w)\} \text{ are nonzero} \\
        \frac{1}{2}\log\left(\sum_a \abs{\psi_a(w)}^4d_a^{-K(w)}\right), & \text{if only one in } \{K_{1B}(w),K_{2B}(w)\} \text{ is nonzero and } K_{12}(w)=0 \\
        2\log\left(\sum_a \abs{\psi_a(w)}d_a^{K(w)/2}\right), & \text{if } K(w)>K_B(w)=0
    \end{array}\right. \label{eq:ccnr}
\end{align}
\end{subequations} 
\end{widetext}
where $\mathcal{D}=1/\cS_{00}$ with $\cS_{ab}$ being the modular $\cS$ matrix of the Chern-Simons theory, and \begin{equation}\label{eq:gamma}
    d_a = \cS_{0a}/\cS_{00},
\end{equation}
is the quantum dimension.

When all representations are fixed $\psi_a(w)=\delta_{a,a(w)}$ by some function $a(w)$, we have simplified expressions for all R\' enyi quantities
\begin{widetext}
\begin{subequations}\label{eq:renyi}
    \begin{align}
    S_P^{(n)} &= S_P = -E_P \log \mathcal{D} + \sum_w K_P(w) \log d_{a(w)}, \label{eq:Sn}\\
    \, {\mc E}_{\mathrm{PT}}^{(n)} &= -[E-E_B+E(1-n)]\log\mathcal{D} + \sum_w [K(w)-K_B(w)+K(w)(1-n)] \log d_{a(w)}, \label{eq:PTn} \\
    \, {\mc E}_{\mathrm{CCNR}}^{(n)} &=-\left[E-\frac{3}{2}E_B+\left(E-\frac{1}{2}E_B\right)(1-n)\right]\log\mathcal{D}\nonumber\\ &\quad + \sum_w\left[K(w)-\frac{3}{2}K_B(w)+\left(K(w)-\frac{1}{2}K_B(w)\right)(1-n)\right] \log d_{a(w)}, \label{eq:CCNRn}
\end{align} 
\end{subequations} 
which reduces to \begin{subequations}\label{eq:PC=S}
    \begin{align}
    2\, {\mc E}_{\mathrm{PT}}^{(n)} &= S_1 + S_2 - S_{12} + (1-n)(S_1 + S_2 + S_{12}) = I_{12}+(1-n)(S_1 + S_2 + S_{12}), \label{eq:PT=I} \\
    2\, {\mc E}_{\mathrm{CCNR}}^{(n)} &= S_1 + S_2 - 2S_{12} + (1-n)(S_1 + S_2) = 2\, {\mc E}_{\mathrm{PT}}^{(n)} - (2-n)S_{12}, \label{eq:CCNR=S}
\end{align} 
\end{subequations} 
\end{widetext}
where $S_{12}=S_A=S_B$, and $I_{12}$ is the mutual information shared between $A_1$ and $A_2$.

\subsubsection{Overview of surgery method}\label{sec:surgery}
Surgery is introduced in \cite{Witten1989CSTheory} to calculate partition functions of Chern-Simons theories in general settings (with braiding among WLs etc). For our purpose, we only need two gadgets: \begin{enumerate}
    \item The partition function of a $S^3$ with a single Wilson loop of representation $a$ in it is \begin{equation}\label{eq:Z=S}
        Z_a = \cS_{0a}.
    \end{equation}
    \item As shown in Fig.~\ref{fig:surgery_app}, an $S^2$-tube traversed by two WLs of representation $a$ can be cut by a $S^3$ with a Wilson loop of the same representation $a$ in it. Formally, let $M$ be the original manifold with the $S^2$-tube connected, and let $M_L,M_R$ be the two manifolds after cutting the tube (and reconnecting the cutted WL in each manifold), then \begin{equation}\label{eq:ZZ=ZZ_app}
        Z(M)\cdot Z_a = Z(M_L) \cdot Z(M_R).
    \end{equation}
\end{enumerate}

As an exercise of these two rules, we have the factorization of a $S^3$ with more than one Wilson loops that do not link with each other, see \eqref{eq:Zaa} as the two-loop case. The reason is that empty $S^2$-tubes can be cut to separate the loops into different $S^3$s. We encourage the reader to revisit the main text for the intuition behind \eqref{eq:ZZ=ZZ_app}, and an example calculation of $\mathcal{E}_{\rm CCNR}$ using the two rules.

\begin{figure}
    \centering
    \includegraphics[width=0.48\textwidth]{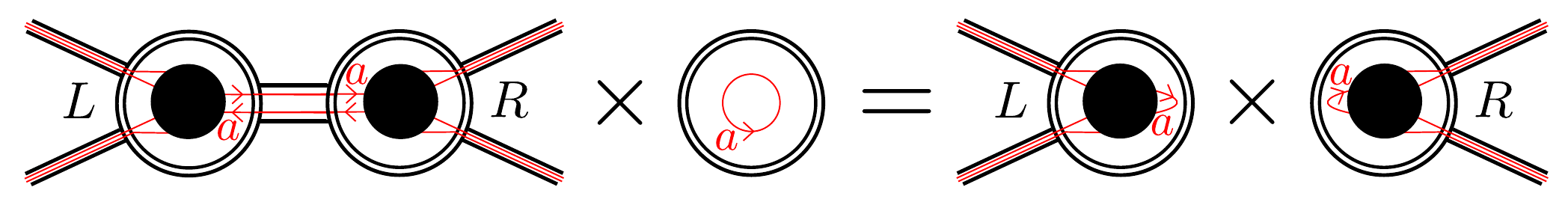}
    \caption{In the surgery method, an $S^2$-tube traversed by two WLs of representation $a$ is cut by a $S^3$ with a Wilson loop of the same representation $a$ in it. In comparison to Fig.~\ref{fig:tripart}, two black concentric circles represent a $S^3$, while two black parallel straight lines represent a $S^2$-tube. The black solid circle means the structure of WLs inside does not matter. $L,R$ means the number of tubes going out of the left/right $S^3$, while only two of the tubes are explicitly shown. This is the ball-tube version of Fig.~\ref{fig:surgery} in the main text. }
    \label{fig:surgery_app}
\end{figure}

\comment{
The intuition behind \eqref{eq:ZZ=ZZ_app} is as follows. The partition function can be viewed as an inner product \begin{equation}
    Z(M) = \braket{\psi_L|\psi_R},
\end{equation}
when cutting the $S^2$-tube, where $\ket{\psi_L},\ket{\psi_R}$ are the left/right states on the cut interface $S^2$. Since the Hilbert space of $S^2$ with one pair of anyons is one-dimensional, the two states $\ket{\psi_L},\ket{\psi_R}$ simply differ by a complex number prefactor. On the other hand, a $S^3$ with a Wilson loop $a$ is also an inner product $Z_a = \braket{\psi_a|\psi_a}$, where the state $\ket{\psi_a}$ lives in the same Hilbert space as $\ket{\psi_L},\ket{\psi_R}$. Using this reference state, \eqref{eq:ZZ=ZZ_app} comes from the identity \begin{equation}
    \braket{\psi_L|\psi_R} \braket{\psi_a|\psi_a} = \braket{\psi_L|\psi_a} \braket{\psi_a|\psi_R},
\end{equation}
for any three states in an one-dimensional Hilbert space.

Before treating the general situation using surgery, we first warm up by calculating an example in full detail. Consider CCNR for Fig.~2(b) in the main text with fixed WL representation $\ket{\psi}=\ket{a}$. First, the torus under tripartition is deformed into the ball-tube system in Fig.~\ref{fig:surgery_app}(b), where three party balls are connected by $D^2$-tubes in a circle. To trace over $B$ and get $\rho_A$, we take the torus with its mirrored replica (with orientation and WL direction reversed), and glue the $B$ regions together in Fig.~\ref{fig:surgery_app}(c). In the ball-tube representation for $\ket{\psi}$, this can be done \emph{separately} for each ball: The $A_1A_2$ ball just doubles with another $A'_1A'_2$, while the $A_1B$ ($A_2B$ similar) ball glues with a $BA'_1$ to produce a $A_1A'_1$ ball. After gluing the balls, it remains to glue the two tubes belonging to $B$: the result is one tube with cross section $S^2$ (two $D^2$s glued together) that connects the interiors of $A_1A'_1$ and $A_2A'_2$. Next, we need to take $n$ (even) copies of $\rho_A$ and glue them such that the partition function yields $\Tr[(R_\rho^\dagger R_\rho)^{n/2}]$ in the replica trick. In this case, it is convenient to align the $n$ copies in a horizontal line (with periodic boundaries due to the trace), with the odd ones rotated in the paper by $180$ degree. Then we just glue each copy to the part of its two neighbors that face towards it, as shown in Fig.~\ref{fig:surgery_app}(d). Still, we glue balls first before tubes. For all $p=1,\cdots,n/2$, the $A_1A'_1$ ($A_2A'_2$) ball of the $(2p)$-th copy glues with that of the $(2p+1)$-th ($(2p-1)$-th) copy, which produces a $S^3$. The $n$ balls shared by the two parties (either $A_1A_2$ or $A'_1A'_2$), however, are glued in two disconnected groups where each group produces a $S^3$. Finally, $D^2$-tubes are glued in pairs to $S^2$-tubes, while the $S^2$-tubes connecting $A_1A'_1$ with $A_2A'_2$ remain, and we arrive at a complicated topology Fig.~\ref{fig:surgery_app}(e) with WLs threading inside. However, we can cut each $S^2$-tube using \eqref{eq:ZZ=ZZ_app}, and the price is just invoking an extra factor $Z_a^{-1}$. 
Furthermore, after the cut, there is one Wilson loop of representation $a$ left in each $S^3$. Since there are $3n$ $S^2$-tubes and $n+2$ $S^3$s, and the partition function of disconnected manifolds factorize, we have \begin{equation}
    \Tr[(R_\rho^\dagger R_\rho)^{n/2}] = Z_a^{-3n} Z_a^{n+2} = Z_a^{2-2n}\rightarrow 1,
\end{equation}
in the limit $n\rightarrow 1$, yielding $\mc E^{\rm top}_{\rm CCNR}=0$.

We make two observations in the calculation above. First, each ball in $\ket{\psi}$ is glued with its replicas in a \emph{local} way independent of the other balls/tubes. Second, we merely count the number of $S^2$-tubes and $S^3$s in the end, and it does not matter which $S^3$ is connected to which in the \emph{global} topology. These also hold when the WL representation is not fixed. The rest of this section is to make these statements rigorous and derive results in the previous subsubsection. 
}

\subsection{No Wilson line}\label{sec:noWilson}
In this subsection we consider the case without WLs, and derive the first terms in \eqref{eq:main} and \eqref{eq:renyi}. Treating WLs will be tedious, but shares the same basic idea in this subsection: each entanglement quantity in the replica method is (roughly) \emph{product of local contributions} from individual balls. 

Thanks to the symmetry $A_1\leftrightarrow A_2$ in the problem, we classify the balls in $\cB$ to 4 types: $\cB_A, \cB_B,\cB_{12},\cB_{AB}$ where the subscript indicates what parties the ball belongs to. For example, $\cB_A$ belongs to either $A_1$ or $A_2$, while $\cB_{12}$ is an edge ball shared by $A_1$ and $A_2$. Slightly abusing notation, we use (for example) $\cB_A$ to represent two precise items: 1. any ball of type $\cB_A$, and 2. the set of all such balls. These 4 types are sketched in the first column of Table~\ref{tab:noWilson}, where $L,R\ge 0$ means the number of tubes going out of the left/right half-ball, although for $\cB_A, \cB_B$ only their sum $L+R$ is meaningful. We label the balls by $\beta$, so $L,R$ are actually functions $L(\beta), R(\beta)$.

\begin{table*}
\begin{tabularx}{\textwidth}{M{0.7cm}|Y|Y|Y|Y} 
        \hline (0,0) & $\ket{\psi}$ & $\rho_A$ & $\tr{\rho_A^n}$ or $\tr{\rho_A^{T_2}}^n$ & $\tr{R_\rho^\dagger R_\rho}^{n/2}$ \\\hline
        \upward{3}{$\cB_{AB}$} & \upward{4.5}{\includegraphics[scale=0.6]{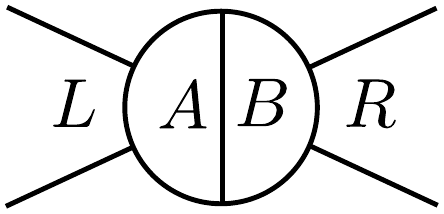}} & \upward{4.5}{\includegraphics[scale=0.6]{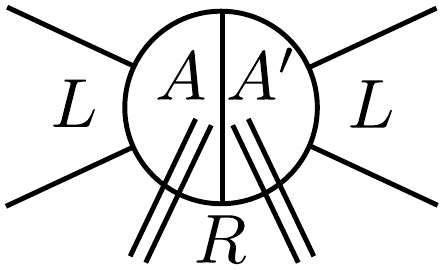}} & \includegraphics[scale=0.6]{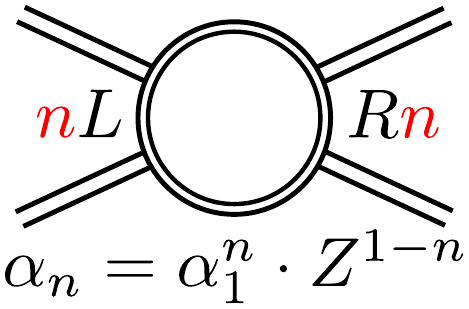} & \includegraphics[scale=0.6]{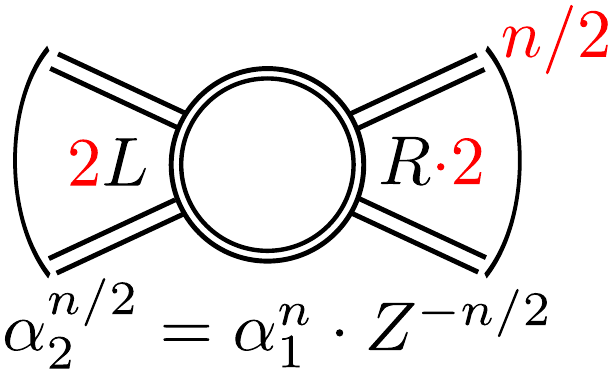}  \\ \hline
        \upward{3.5}{$\cB_{12}$} & \upward{5}{\includegraphics[scale=0.6]{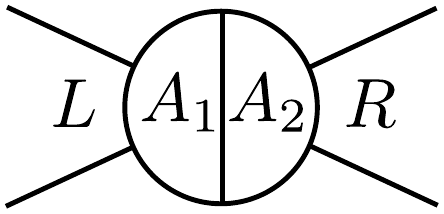}} & \includegraphics[scale=0.6]{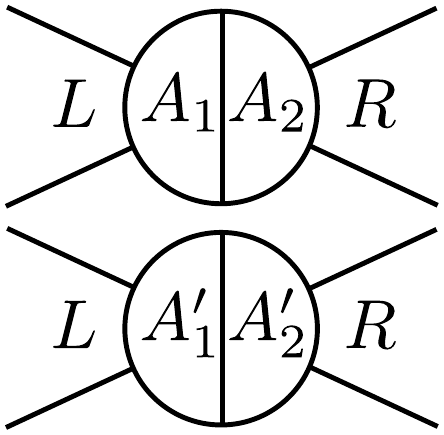} & \multicolumn{2}{c}{\includegraphics[scale=0.6]{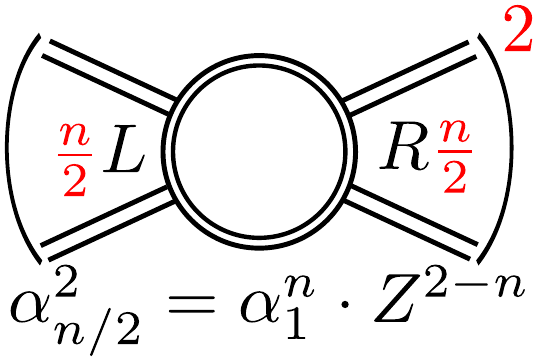} }  \\ \hline
        \upward{3.5}{$\cB_{A}$} & \upward{5}{\includegraphics[scale=0.6]{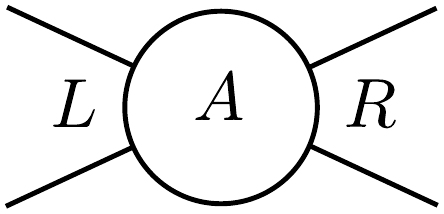}} & \includegraphics[scale=0.6]{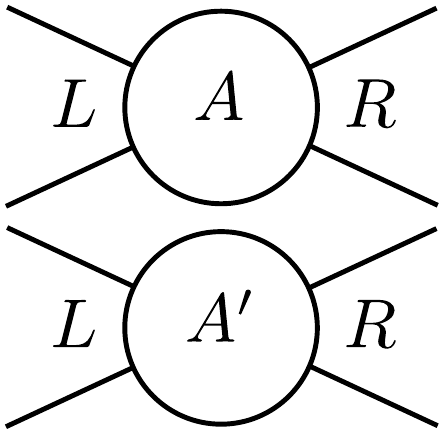} & 
        \multicolumn{2}{c}{\multirow{2}{*}[4em]{\includegraphics[scale=0.6]{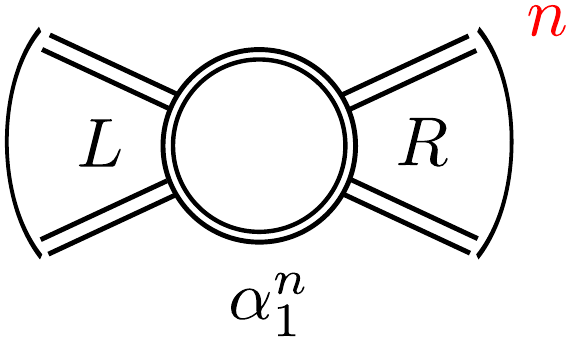}}} \\ \cline{1-3}
        \upward{2}{$\cB_{B}$} & \includegraphics[scale=0.6]{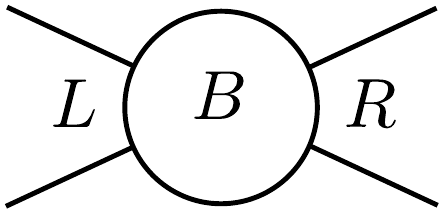} & \includegraphics[scale=0.6]{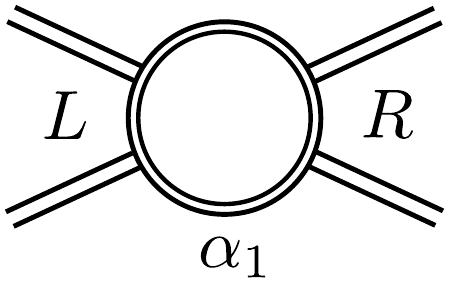} & \multicolumn{2}{c}{} \\ \hline
\end{tabularx}
\caption{\label{tab:noWilson} Local contribution of the four types of balls to various quantities, if there is no WL. We refer to the element in the second row, third column as (2,3) for example. The third column represents both EE and PT in the following way: we only show the PT value $\tr{\rho_A^{T_2}}^n$ with $n$ even for $\cB_{12}$, because for EE it is the same as $\cB_A$. For the other three types, EE and PT are literarily the same, so one plot represents both. For the fourth column $n$ is even. The symbols are defined by Fig.~\ref{fig:tripart}(c) together with the following: $L,R$ means the number of tubes going out of the left/right half-ball, while only two of the tubes are explicitly shown. A double-line means a tube of cross section $S^2$, while a double-lined circle means $S^3$. $\alpha_n$ is defined in \eqref{eq:an}, and the (4,2) element shows that the normalization factor for $n$ copies of $\rho_A$ is $\alpha_1^n$. }
\end{table*}

\subsubsection{Trace over $B$}
For the entanglement measures we consider, the first manipulation on the state is always tracing $B$ to get the density matrix $\rho_A$. Although it is sophisticated to plot the global geometry of the path-integral representation of $\rho_A$, it is clear what happens locally to each ball (together with its attached tubes), as shown in the second column of Table~\ref{tab:noWilson}. A $\cB_A$ or $\cB_{12}$ ball is just duplicated. A $\cB_{AB}$ becomes a ball composed of $A$ and $A'$ half-balls, and the tubes connecting $B$ becomes tubes with cross section $S^2$. A $\cB_B$ becomes a $S^3$, with all its tubes becoming $S^2$-tubes.

This last case actually yields the normalization of the state, where each ball is traced over as a whole to become $S^3$. For one ball, it has $L+R$ $S^2$-tubes. So in the surgery method, we should use $L+R$ balls to cut them, which yields a factor of $Z^{-L-R}$ from \eqref{eq:ZZ=ZZ_app}. Here $Z=\cS_{00}$ is the partition function for a $S^3$ without WLs \eqref{eq:Z=S}. Since each tube is shared by two balls, each ball contributes $Z^{1-(L+R)/2}$ as a whole, where the $Z^1$ factor comes from the ball itself and the rest comes from the tubes attached to it. Then the normalization is \begin{equation}\label{eq:norm}
    \braket{\psi|\psi} = \prod_{\beta\in\cB} \alpha_1(\beta) = Z^{1-g},
\end{equation} 
where we have defined \begin{equation}\label{eq:an}
    \alpha_n(\beta) := Z^{1-[L(\beta)+R(\beta)]n/2},
\end{equation}
for each ball $\beta$ in the representation of $\ket{\psi}$. 
Observe that this result is determined \emph{locally} by looking at the balls individually. It does not depend on which ball is connected to which. This is the key feature that enables all our calculations. Note that in this subsection $\ket{\psi}$ (and $\rho_A$ etc) is the \emph{unnormalized} state represented by path integral with no prefactors, unlike \eqref{eq:psi}.

\subsubsection{Replica method for EE and PT}
Focusing on EE first, in the replica method $n$ copies of $\rho_A$ should connect cyclically. This can be done locally for each ball in $\rho_A$. In Table~\ref{tab:noWilson}(1,2), $\rho_A$ of $\cB_{AB}$ consists $A,A'$ as its half-balls, so the cyclic connection of $n$ balls yields a single $S^3$. The $R$ $S^2$-tubes for each copy of $\rho_A$ are just replicated $n$ times. Among the $2L$ $D^2$-tubes of the $p$-th copy, the $L$ of them connected to $A'$ combine with those connected to $A$ of the $(p+1)$-th copy, making $L$ $S^2$-tubes. Thus there are totally $n(L+R)$ $S^2$-tubes going out of the $S^3$, as shown in Table~\ref{tab:noWilson}(1,3), which contributes a factor $\alpha_n$ \eqref{eq:an} to $\tr{\rho_A^n}$. $\cB_{12}$ is the same as $\cB_A$ for calculating EE, where the $A'$ ball of the $p$-th copy of $\rho_A$ combines with $A$ of the $(p+1)$-th copy to yield a $S^3$. Their tubes combine to $S^2$-tubes accordingly. Thus there will be $n$ $S^3$s, each has $L+R$ $S^2$-tubes, as shown in Table~\ref{tab:noWilson}(3,3). Moreover, the contribution of $\cB_A$ is the same as its contribution $\alpha_1^n$ to normalization. $\cB_B$ is not involved in the contraction process, so it is just replicated $n$ times, giving an identical result to $\cB_A$. This is expected because of the symmetry $S_A = S_B$. In conclusion, the unnormalized $\tr{\rho_A^n}$ is \begin{equation}
    \tr{\rho_A^n} = \left(\prod_{\beta\in\cB_{AB} } \alpha_n(\beta)\right) \left(\prod_{\beta\in \cB-\cB_{AB} } \alpha_1^n(\beta)\right),
\end{equation}
and when canceling with the normalization \eqref{eq:norm}, \begin{equation}
    \frac{\tr{\rho_A^n}}{\braket{\psi|\psi}^n} = \prod_{\beta\in\cB_{AB} } \frac{\alpha_n(\beta)}{\alpha_1^n(\beta)} = \prod_{\beta\in\cB_{AB} } Z^{1-n} = Z^{E_B(1-n)}.
\end{equation}
This leads to the first term in \eqref{eq:SA} by taking the replica limit.

PT is almost the same as EE, because the only difference is that $A_1$ and $A_2$ contract in opposite directions in the cycle. Thus $\cB_{AB},\cB_A,\cB_B$ contributes exactly the same local factor as for EE, since they locally cannot tell the relative direction of contraction. That is why we put EE and PT in one column in Table~\ref{tab:noWilson}, where the EE for $\cB_{12}$ contributes the same amount as $\cB_A$, and the (2,3) element is for PT only. To get Table~\ref{tab:noWilson}(2,3), the $A_1A_2$ ball of the $p$-th copy contracts with the $A_1'A_2'$ balls of the $(p-1)$-th and the $(p+1)$-th. Since $n$ is even for PT, the $A_1A_2$ ball of the odd copies combine with the $A_1'A_2'$ ball of the even copies to get one $S^3$, while the $A_1A_2$ ball of the even copies combine with the $A_1'A_2'$ ball of the odd copies to get another $S^3$. Each of the two $S^3$s contributes a factor $\alpha_{n/2}$ because there are $(L+R)n/2$ $S^2$-tubes. Summing up all balls, we have \begin{align}
    \frac{\tr{\rho_A^{T_2}}^n}{\braket{\psi|\psi}^n} &= \left( \prod_{\beta\in\cB_{AB} } \frac{\alpha_n(\beta)}{\alpha_1^n(\beta)} \right) \left( \prod_{\beta\in\cB_{12} } \frac{\alpha_{n/2}^2(\beta)}{\alpha_1^n(\beta)} \right) \nonumber\\ 
    &= \left( \prod_{\beta\in\cB_{AB} } Z^{1-n} \right) \left( \prod_{\beta\in\cB_{12} } Z^{2-n} \right) \nonumber\\
    &= Z^{E_B(1-n)+(E-E_B)(2-n)},
\end{align}
which leads to the first term in \eqref{eq:ppt} by taking $n\rightarrow 1$.

\subsubsection{Replica method for CCNR}
Consider contracting $n$ copies of $\rho_A$ for CCNR. For the $(2p+1)$-th copy, $A_1$ ($A_1'$) regions connect to $A_1'$ ($A_1$) of the $(2p)$-th copy, while $A_2$ ($A_2'$) regions connect to $A_2'$ ($A_2$) of the $(2p+2)$-th copy. This can be done for the four types of balls in the similar way above, so here we just report the results in the fourth column of Table~\ref{tab:noWilson}. Summing up all balls, we have \begin{align}
    \frac{\tr{R_\rho^\dagger R_\rho}^{n/2}}{\braket{\psi|\psi}^n} &= \left( \prod_{\beta\in\cB_{AB} } \frac{\alpha_2^{n/2}(\beta)}{\alpha_1^n(\beta)} \right) \left( \prod_{\beta\in\cB_{12} } \frac{\alpha_{n/2}^2(\beta)}{\alpha_1^n(\beta)} \right) \nonumber\\
    &= \left( \prod_{\beta\in\cB_{AB} } Z^{-n/2} \right) \left( \prod_{\beta\in\cB_{12} } Z^{2-n} \right) \nonumber\\
    &= Z^{E_B(-n/2)+(E-E_B)(2-n)},
\end{align}
which leads to the first term in \eqref{eq:ccnr} by taking $n\rightarrow 1$.

\subsection{Normalization with Wilson line: Extra factor for party balls}\label{sec:norm}
In the previous subsection we omitted WLs (or in other words, setting $\psi_a(w)=\delta_{a0}$). With WL, the local nature still holds, so we only need to recalculate the balls that the WLs traverse, gather all local extra factors (LEFs) from the WL contribution, and multiply them together with the no-WL result. Then after a careful sum over representations, the final result is in the form of a global extra factor (GEF) multiplied by the no-WL result.

Since there are $4$ types of balls and many ways for a WL to traverse a ball, we separate the discussion into two subsections: this and the next. In this subsection we warm up by focusing on the normalization of the state, where all regions $r_j$ can be viewed as one party. The balls in $\cM_3$ are then all party balls which can be viewed as $\cB_B$. We will check the orthogonality of the basis in \eqref{eq:psi}, and derive results that will be useful in Section \ref{sec:WL}.

\subsubsection{Only one WL, which traverses each ball at most once}
We first consider only $W=1$ WL for now. The state is then \begin{equation}\label{eq:psi1}
    \ket{\psi}=\sum_a \psi_a \ket{a} = \sum_a \tps_a \ket{\widetilde{a}}, \quad \mathrm{where} \quad \tps_a = \frac{1}{\sqrt{\braket{\widetilde{a}|\widetilde{a}}} }\psi_a.
\end{equation}
Here, unlike $\ket{a}$, $\ket{\widetilde{a}}$ is the \emph{unnormalized} state defined by the path integral with the $a$-independent prefactor $Z^{(g-1)/2}$ (see \eqref{eq:norm}) that normalizes the no-WL state $\ket{0}=\ket{\widetilde{0}}$.

If $w$ punctures $\cM$, it can shrink along its trajectory to be contained in one ball; otherwise $w$ is a loop, which we assume traverses each ball at most once for now. Thus we have two possibilities $\cB'_1, \cB'_2$ for a ball that contains WL, as shown in the first column of Table \ref{tab:norm}. Note that only tubes containing the WL are shown explicitly, since other tubes without WL do not contribute to LEF. For normalization $\braket{\psi|\psi}$, each ball combines with its replica to produce a $S^3$, and the $D^2$-tubes become $S^2$-tubes. The representation is denoted by $a$ and $a'$ for the two replicas, and there is a \emph{global} summation over them: $\braket{\psi|\psi} = \sum_{aa'} \tps_a\tps^*_{a'}\braket{\tilde{a}'|\tilde{a}}$, where $\braket{\tilde{a}'|\tilde{a}}$ is $\braket{0|0}$ times the product of all LEFs. 

For $\cB'_1$,  each $S^2$-tube traversed by the WL is cut by a $S^3$ with a Wilson loop in it, leaving another loop in the original $S^3$. Since one loop can only have one representation, the surgery yields nonzero result only when $a=a'$, thus a factor $\delta_{aa'}$ should be included. Physically this is because the $S^2$ cross section of the tube hosts a Hilbert space, where excitation can only be created in anyon anti-anyon pairs. Since there are two tubes containing the WL together with the ball, the LEF is \begin{equation}\label{eq:daa'}
    Q^{\mathrm{N}}(\cB'_1) = \frac{Z_a^{1-2/2} } {Z^{1-2/2}}\delta_{aa'} = \delta_{aa'},
\end{equation}
where $Z_a=\cS_{0a}$ is the partition function for a $S^3$ with an $a$-loop \eqref{eq:Z=S}, and $Z=Z_0$ is the $a=0$ result. The superscript $\mathrm{N}$ refers to normalization.

For $\cB'_2$ in the second row of Table \ref{tab:norm}, the two WLs of the two replicas connect to form a loop in $S^3$, which also requires $a=a'$. After cutting the tubes that all have no WL, the LEF is then \begin{equation}\label{eq:gamma_daa}
    Q^{\mathrm{N}}(\cB'_2) = \frac{Z_a}{Z}\delta_{aa'}= d_a\delta_{aa'},
\end{equation}
where $d_a$ is defined in \eqref{eq:gamma}.

To sum up, the GEF of normalization due to WL is \begin{align}\label{eq:norm_1WL}
    \frac{\braket{\psi|\psi}}{\braket{0|0}} =& \sum_{aa'} \tps_a\tps^*_{a'} \delta_{aa'} \nu_a = \sum_{a} \abs{\tps_a}^2\nu_a, \quad\text{where}\quad \\
    \nu_a:=& \left\{\begin{array}{cr}
        1, & \text{if } w \text{ is a loop}  \\
        d_a, & \text{if } w \text{ is not a loop}
    \end{array}\right.
\end{align}
Here ``not a loop'' means $w$ punctures $\cM$ at two points. Thus in order for $\ket{\psi}$ to be normalized, we have \begin{equation}\label{eq:tps}
    \tps_{a}=\psi_a \nu_a^{-1/2}. 
\end{equation}

\begin{table*}
\begin{tabularx}{\textwidth}{M{0.7cm}|Y|Y|Y} 
        \hline (0,0) & $\ket{\psi}$ & $\braket{\psi|\psi}$ & $\tr{\rho_A^n}$ or $\tr{\rho_A^{T_2}}^n$ or $\tr{R_\rho^\dagger R_\rho}^{n/2}$ \\\hline
        \upward{2}{$\cB'_1$} & \includegraphics[scale=0.6]{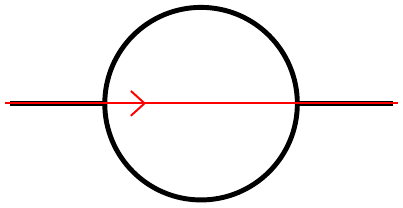} & \includegraphics[scale=0.6]{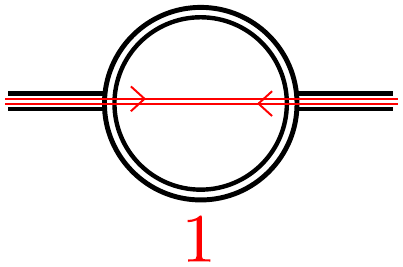} & \includegraphics[scale=0.6]{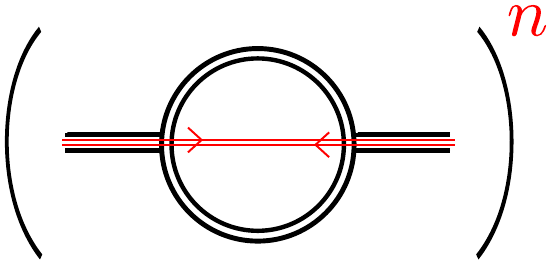} \\ \hline
        \upward{2}{$\cB'_2$} & \includegraphics[scale=0.6]{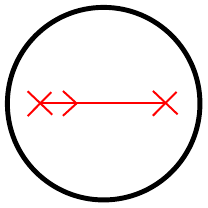} & \includegraphics[scale=0.6]{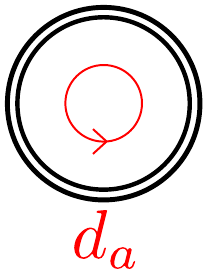} &  \includegraphics[scale=0.6]{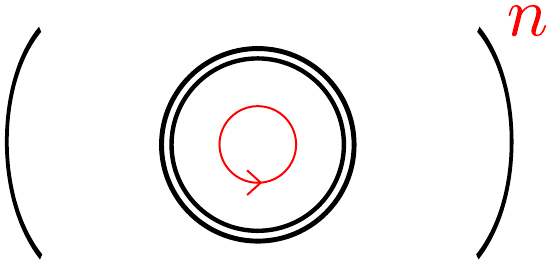} \\ \hline
\end{tabularx}
\caption{\label{tab:norm} LEF of a party ball traversed by one WL. Only the tubes containing the WL are shown. }
\end{table*}

\subsubsection{Multiple WLs, where each WL can traverse a ball multiple times}\label{sec:factor}
Now we consider the general case with $W>1$ WLs, and each WL can traverse a ball more than once. If for each ball there is still at most one WL that passes it, then all balls' contribution to normalization is multiplied so that the total WL contribution factorizes to each individual WL. Formally, the GEF $Q$ is \begin{align}
    Q &= \sum_{a(1)a'(1)a(2)\cdots a'(W)} \left(\prod_{w=1}^W \tps_{a(w)}\tps_{a'(w)}^*\right) \prod_{\beta\in \cB} Q_{\beta} \nonumber\\
    &= \prod_{w=1}^W\left(\sum_{a(w)a'(w)} \tps_{a(w)}\tps_{a'(w)}^* \prod_{\beta\in \cB} Q_{\beta}(w)\right),
\end{align}
as long as the LEF for each ball factorizes \begin{equation}\label{eq:Qnb}
    Q_{\beta} = \prod_{w=1}^W Q_{\beta}(w),
\end{equation}
where $Q_{\beta}(w)$ is the LEF if WLs other than $w$ are all removed. \eqref{eq:Qnb} is trivial if only one WL $w$ passes $\beta$, because then $Q_{\beta}=Q_{\beta}(w)$ and $Q_{\beta}(w') = 1$ for $w'\neq w$. Moreover, if $w$ traverses $\beta$ $M(w,\beta)$ times, then the ball contains $M(w,\beta)$ segments of WL $w$. Intuitively, a ball locally cannot distinguish whether two WL segments in it belong to one WL or not, so we expect a further factorization similar to \eqref{eq:Qnb}: \begin{equation}\label{eq:Qw}
    Q_{\beta}(w) = \prod_{m=1}^{M(w,m)} Q_{\beta}(w,m),
\end{equation}
where $Q_\beta(w,m)$ is the LEF when only the $m$-th segment of $w$ is present.

Before proving \eqref{eq:Qnb} and \eqref{eq:Qw}, we first discuss their implication for normalization. A loop $w$ may traverse a ball multiple times, but since $Q_{\beta}(w,m)$ is always $\delta_{aa'}$ \eqref{eq:daa'} whose power equals itself, the LEF $Q_{\beta}(w)$ is still $\delta_{aa'}$ as long as $w$ traverses $\beta$. On the other hand, if $w$ is not a loop, it is contained in one ball and we do not need \eqref{eq:Qw}. As a result, the GEF is just the product of \eqref{eq:norm_1WL} using \eqref{eq:Qnb}: \begin{equation}
    \frac{\braket{\psi|\psi}}{\braket{0|0}} = \prod_{w=1}^W \sum_{a} \abs{\tps_a(w)}^2\nu_a = \prod_{w=1}^W \sum_{a} \abs{\psi_a(w)}^2.
\end{equation}
This verifies that the basis in \eqref{eq:psi} is orthogonal.

\begin{figure}
    \centering
    \includegraphics[width=0.48\textwidth]{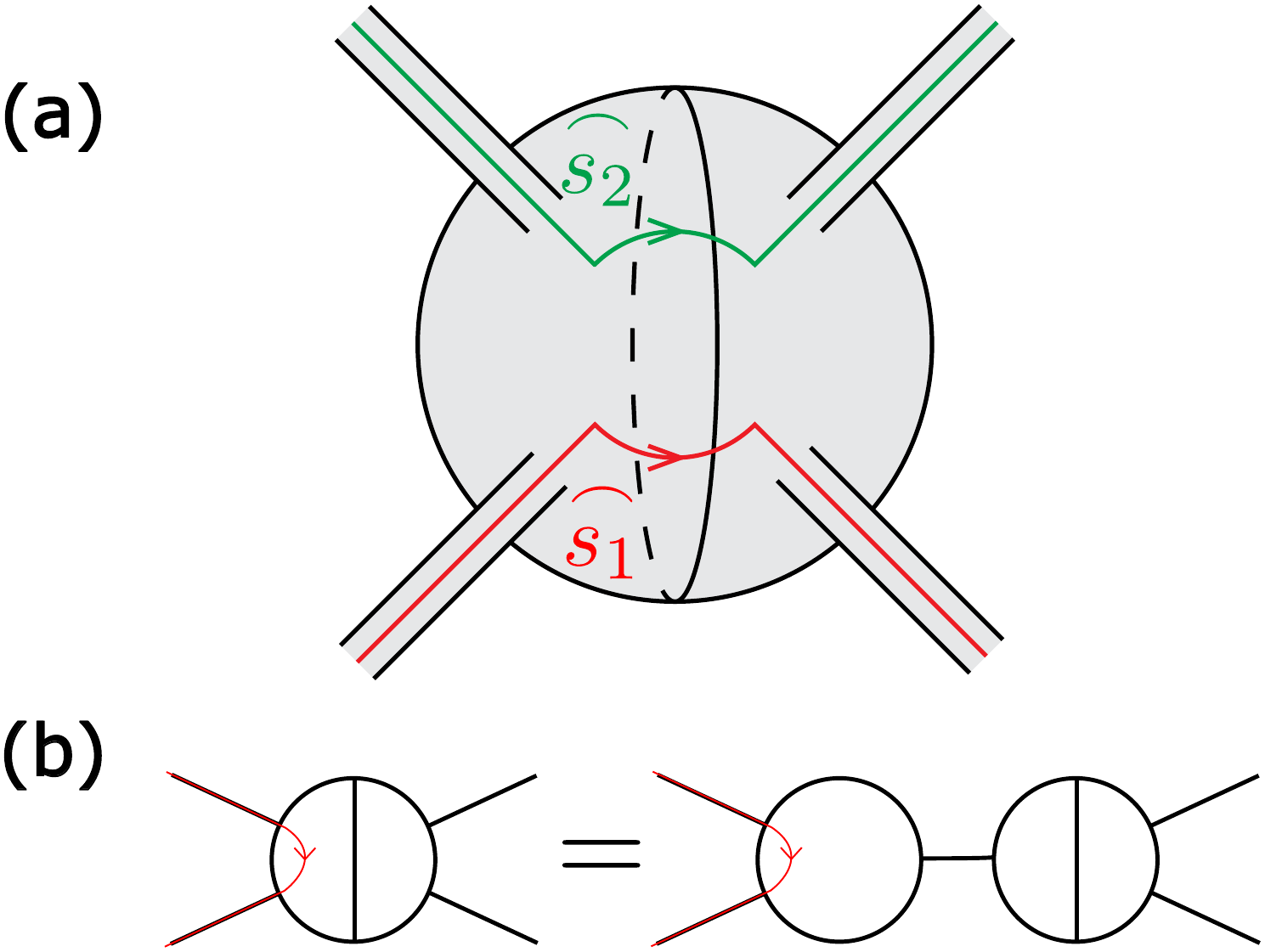}
    \caption{\label{fig:2Wil} (a) Example of two WLs traversing one ball. The WLs lie on the surface and do not intersect due to Assumption \ref{asmpW}. As a consequence, the Wilson loops in the final $S^3$ do not link. (b) A WL in an edge ball that does not traverse the interface, is equivalent to the WL traversing a new party ball, i.e., $\cB'_1$ in Table.~\ref{tab:norm}. }
\end{figure}

It remains to show \eqref{eq:Qnb} and \eqref{eq:Qw}, which combine to just one claim: the LEF for each ball factorizes to the WL segments contained in it. We prove this for two WL segments, since the general case follows similarly.

Suppose WL segments $s_1$ and $s_2$ traverse $\beta$ as shown in Fig.~\ref{fig:2Wil}(a), with representation $a_1$ and $a_2$. They can belong to either one or two WLs globally. For the former case, there is a \emph{global} prefactor $\delta_{a_1a_2}$, but that is irrelevant for LEF here. According to Assumption \ref{asmpW}, the two WL segments live on the surface of the ball, and each $D^2$-tube contains at most one WL segment.
On the ball, the two ``arcs'' $\overset{\frown}{s_1}$ and $\overset{\frown}{s_2}$ do not touch each other. Then it is clear that after gluing replicas and cutting the $S^2$-tubes, the remaining Wilson loops for $s_1$ and $s_2$ in the $S^3$ do not link with each other. Its partition function $Z_{a_1a_2}$ is then factorized due to the surgery operation in Fig.~\ref{fig:surgery_app} with  WL in the tube \begin{equation}\label{eq:Zaa}
    Z_{a_1a_2}\cdot Z = Z_{a_1} Z_{a_2}, \rarrow \frac{Z_{a_1a_2}}{Z} = \frac{Z_{a_1}}{Z} \frac{Z_{a_2}}{Z}.
\end{equation}
On the other hand, LEF due to cutting tubes (which includes factors like $\delta_{a_1a'_1}$) also factorizes because each $S^2$-tube belongs to either $s_1$ or $s_2$, or neither. This establishes the factorization properties \eqref{eq:Qnb} and \eqref{eq:Qw}.

\subsection{Wilson line contribution to entanglement}\label{sec:WL}
In this subsection, we derive the GEF for the three entanglement measures at any replica number $n$, which reduces to our main results \eqref{eq:main} and \eqref{eq:renyi} by combining with the results in Section \ref{sec:noWilson} and taking special limits.
Again, we first consider a state \eqref{eq:psi1} with $W=1$ WL, which is assumed to traverse each ball at most once. We use the factorization property to handle the general case in the final subsubsection.

\subsubsection{WL in an edge ball that traverses the interface}
There are totally seven possibilities for the WL to traverse the interface of an edge ball, as shown by $\tB_1,\cdots,\tB_7$ in the first column of Table \ref{tab:Wilson}. Recall that only tubes containing the WL are shown explicitly. As in Section \ref{sec:noWilson}, we first trace over $B$ to get $\rho_A$, as shown in the second column.

We move on to EE and PT. As discussed in Section \ref{sec:noWilson}, for these two quantities the local contraction structure is the same for $\cB_{AB}$, while for $\cB_{12}$ we only need to consider PT. Thus we report the results in the single third column in Table~\ref{tab:Wilson}.
Taking $\tB_1$ for example, $n$ copies of $\rho_A$ contract to one $S^3$ with $n$ $S^2$-tubes on both sides. Each of the $n$ tubes on the left comes from the $S^2$-tube of one copy. For the right, the $D^2$-tube with $a'_p$-line of the $p$-th copy combine with the $D^2$-tube with $a_{p+1}$-line of the $(p+1)$-th copy to yield one $S^2$-tube, where $p=1,\cdots,n$. In this process all representations are identified to $a$, which yields a factor \begin{equation}\label{eq:11nn}
    [11'\cdots nn'] := \delta_{a_1a'_1}\delta_{a'_1a_2}\delta_{a_2a_2'}\cdots \delta_{a_na_n'}.
\end{equation}
Hereafter we use such shorthand notations to identify all representations in the square brakets.
Furthermore, along the WLs one can traverse all of the $2n$ $S^2$-tubes if hopping between the two lines in one tube is allowed. This means when cutting all the tubes in surgery, there is only one loop of representation $a_1\equiv a$ left in the $S^3$. Since each $S^2$-tube is cut by a $S^3$ with an $a$-loop, the LEF is then \begin{align}
    Q^{\mathrm{EE}}(\tB_1) = Q^{\mathrm{PT}}(\tB_1) &= \frac{Z_a^{1-2n/2}}{Z^{1-2n/2}}[11'\cdots nn'] \nonumber\\ &= d_a^{1-n}[11'\cdots nn'].
\end{align}
The other cases are worked out similarly, for example \begin{equation}\label{eq:12nn}
    Q^{\mathrm{PT}}(\tB_5) = d_{a_1}^{1-\frac{n}{2}} d_{a_2}^{1-\frac{n}{2}}\times [12'\cdots (n-1) n']\times [1'2\cdots (n-1)'n].
\end{equation}

CCNR is similar, for which we report the result for all types of balls in the fourth column of Table~\ref{tab:Wilson}. There are three possibilities locally for which representations are equal: $\cB_{12}$ has \eqref{eq:12nn}, while $\cB_{1B}$ and $\cB_{2B}$ yield \begin{equation}\label{eq:1122}
    [11'22']\times [33'44']\times \cdots \times[(n-1)(n-1)'nn'],
\end{equation}
and \begin{equation}\label{eq:2233}
    [22'33']\times [44'55']\times \cdots \times[nn'11'],
\end{equation}
respectively. For example, \begin{equation}
    Q^{\mathrm{CCNR}}(\tB_1) = \prod_{p=1}^{n/2} d_{a_{2p-1}}^{-1} \times \left(\text{ either \eqref{eq:1122} or \eqref{eq:2233} } \right),
\end{equation}
depending on whether the left half-ball is $A_1$ or $A_2$. Note that PT and CCNR are identical for $\cB_{12}$, which is also manifest in Table~\ref{tab:noWilson}.

\begin{table*}
\begin{tabularx}{\textwidth}{M{0.7cm}|M{2.3cm}|M{2.3cm}|M{1.8cm}|Y|Y|M{2.7cm}} 
    \hline (0,0) & $\ket{\psi}$ & $\rho_A$ & \multicolumn{2}{c|}{ $\tr{\rho_A^n}$ or $\tr{\rho_A^{T_2}}^n$} & \multicolumn{2}{c}{ $\tr{R_\rho^\dagger R_\rho}^{n/2}$} \\\hline
    \upward{3}{$\tB_1$} & \upward{4}{
    \includegraphics[scale=0.55]{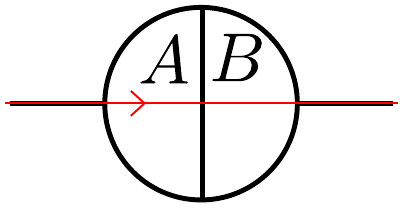}} & 
    \includegraphics[scale=0.55]{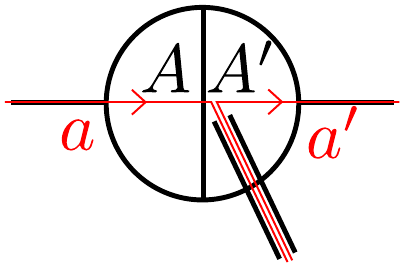} & \multirow[b]{4}{*}{\shortstack{$[11'\cdots nn']$}} &
    \includegraphics[scale=0.55]{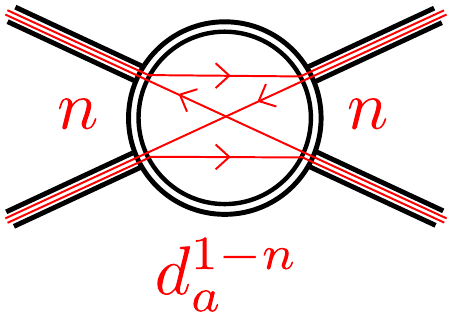} & 
    \includegraphics[scale=0.55]{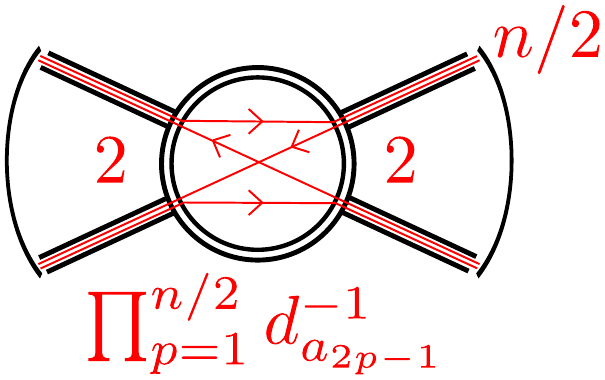} & \multirow[b]{4}{*}{\shortstack{ $[11'22']\times \cdots\times$ \\ $[(n-1)(n-1)'nn']$ \\ or \\ $[22'33']\times \cdots\times$ \\ $[nn'11']$}} \\ \cline{1-3}\cline{5-6}
    \upward{1.5}{$\tB_2$} & 
    \includegraphics[scale=0.55]{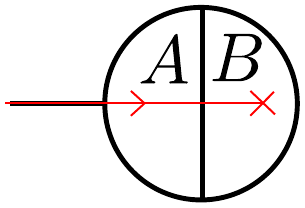} & 
    \includegraphics[scale=0.55]{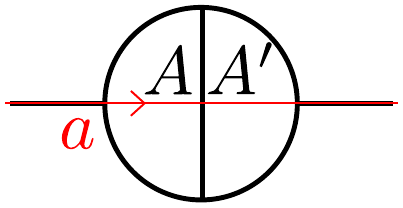} & \multirow{4}{*}{} & \multirow{2}{*}[0.7em]{
    \includegraphics[scale=0.55]{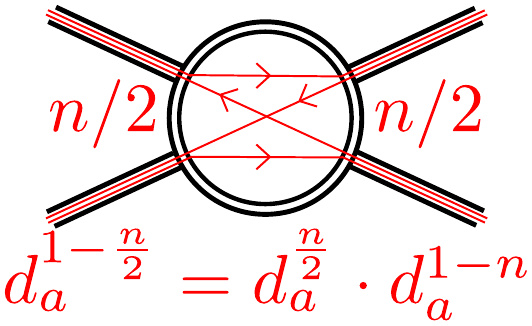}} & \multirow{2}{*}[1.5em]{
    \includegraphics[scale=0.55]{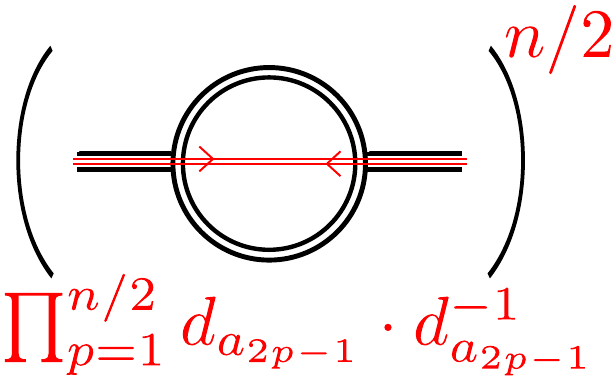} } & \multirow{4}{*}{} \\ \cline{1-3}
    \upward{2}{$\tB_3$} & 
    \includegraphics[scale=0.55]{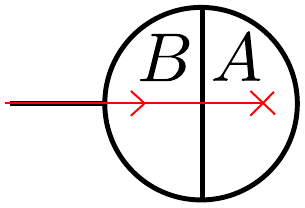} & \includegraphics[scale=0.55]{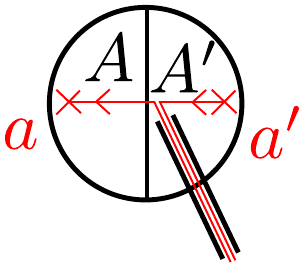} & \multirow{4}{*}{} &
    \multirow{2}{*}{} &  & \multirow{4}{*}{} \\ \cline{1-3}\cline{5-6}
    \upward{3}{$\tB_4$} & \upward{4}{
    \includegraphics[scale=0.55]{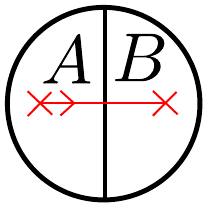}} & \upward{4}{
    \includegraphics[scale=0.55]{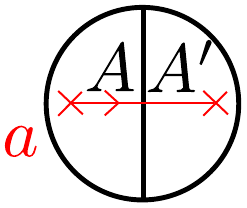}} & \multirow{4}{*}{} &
    \includegraphics[scale=0.55]{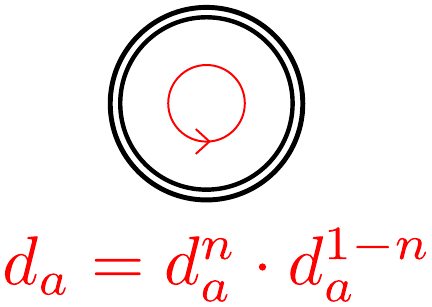} & \includegraphics[scale=0.55]{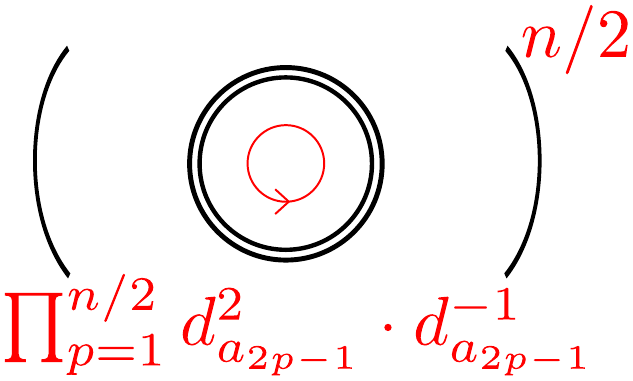} & \multirow{4}{*}{} \\ \hline
    \upward{3}{$\tB_5$} & \upward{4.5}{
    \includegraphics[scale=0.55]{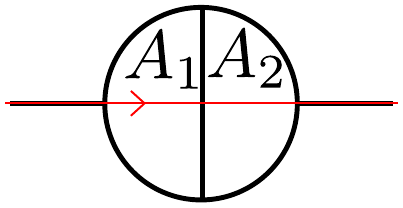}} & \includegraphics[scale=0.55]{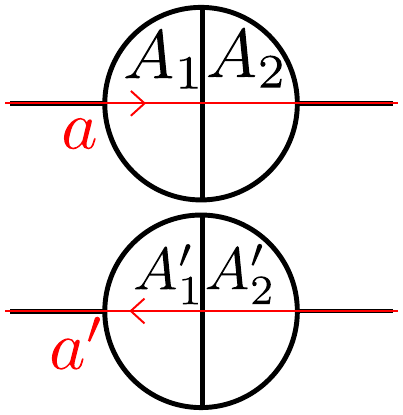} & \multirow[b]{3}{*}{\shortstack{$[12'3\cdots n']$\\ $\times[1'23'\cdots n]$}} & \multicolumn{2}{c|}{
    \includegraphics[scale=0.55]{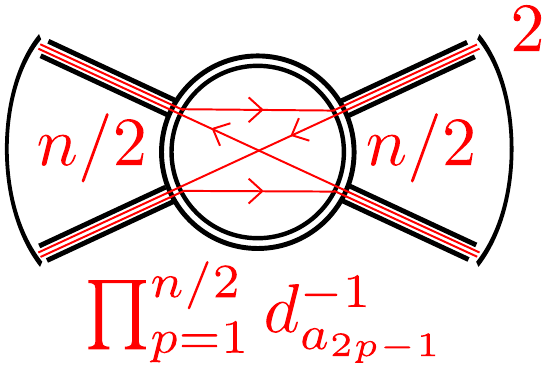}} & \multirow[b]{3}{*}{\shortstack{$[12'3\cdots n']$\\ $\times[1'23'\cdots n]$}} \\ \cline{1-3}\cline{5-6}
    \upward{3}{$\tB_6$} & \upward{4.5}{
    \includegraphics[scale=0.55]{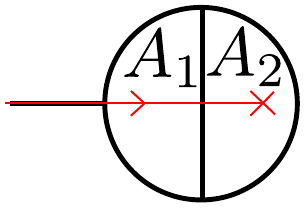}} & \includegraphics[scale=0.55]{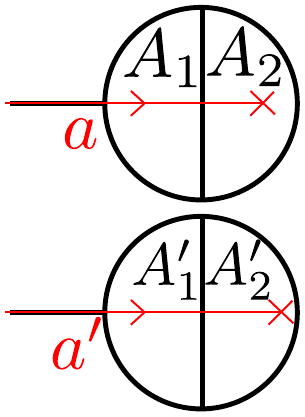} & \multirow{3}{*}{} & \multicolumn{2}{c|}{
    \includegraphics[scale=0.55]{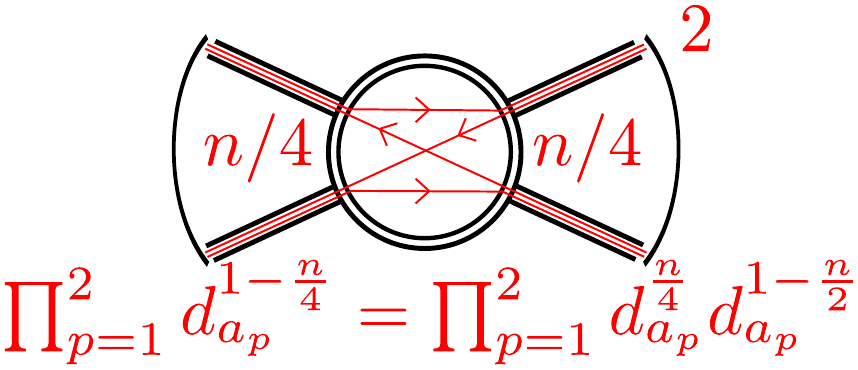}} & \multirow{3}{*}{} \\ \cline{1-3}\cline{5-6}
    \upward{3}{$\tB_7$} & \upward{4.5}{
    \includegraphics[scale=0.55]{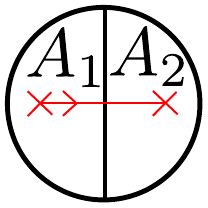}} & \includegraphics[scale=0.55]{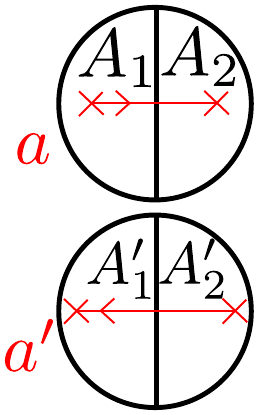} & \multirow{3}{*}{} & \multicolumn{2}{c|}{
    \includegraphics[scale=0.55]{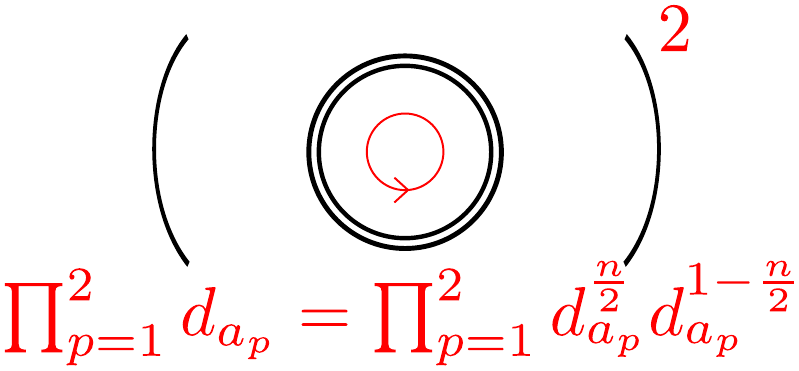}} & \multirow{3}{*}{} \\ \hline
\end{tabularx}
\caption{\label{tab:Wilson} LEF for various quantities for the $7$ types of balls $\tB_1,\cdots,\tB_7$. See Table~\ref{tab:noWilson} on how EE and PT are combined. The LEF depends on both the topology of the final object (the fourth and fifth column), and the way representations are identified, which yields a factor (defined by e.g. \eqref{eq:11nn}) shown in the third and last column. The symbols are the same as Fig.~\ref{fig:tripart}(c) and Table~\ref{tab:noWilson}, and the crossing structure of WLs indicates that each $S^3$ contains exactly one Wilson loop after cutting the $S^2$-tubes. }
\end{table*}

\subsubsection{WL in a party ball}
For a party ball, it can be traversed by the WL in two ways $\cB'_1, \cB'_2$ as shown in Table \ref{tab:norm}. There is no WL segment that goes through one tube and punctures the ball surface once, because it can shrink to disappear in this ball. For each of the two cases at a given replica number $n$, the final topology is the same regardless of the entanglement measure and the party ($A_1, A_2$ or $B$) of the ball, which is already manifest in Table \ref{tab:noWilson}. As shown in the third column of Table \ref{tab:norm}, it is just $n$ disconnected copies of the $n=1$ topology for normalization. The identification of representations, however, depends on the party. We first focus on $B$. Using \eqref{eq:daa'} and \eqref{eq:gamma_daa}, the LEFs are \begin{equation}\label{eq:QB'}
    Q(\cB'_1)= [11']\times [22']\times\cdots\times [nn'],  \quad \text{for } \cB_B
\end{equation}
and \begin{equation}
    Q(\cB'_2)= \prod_{p=1}^nd_{a_p}\times
        [11']\times [22']\times\cdots\times [nn'],  \quad \text{for } \cB_B.
\end{equation}
Since $\cB'_2$ contains the whole WL, the GEF for any entanglement measure is then \begin{equation}\label{eq:sum=1}
    \sum_{a_1a'_1\cdots a'_n} \tps_{a_1}\tps_{a_1'}^*\cdots \tps_{a_n}\tps_{a_n'}^* Q(\cB'_2) = \left(\sum_a \abs{\tps_a}^2 d_a\right)^n =1,
\end{equation}
using \eqref{eq:tps}. This result is intuitive: the WL is created locally in the single region, so it does not affect the entanglement \emph{among} regions. \eqref{eq:sum=1} actually holds for all parties, because the representations are always identified in prime and no-prime pairs.  

On the other hand, the party also does not matter for $\cB'_1$, because we can always set $Q(\cB'_1)=1$ without changing the product of all LEFs (which yields GEF after summing over representations). The reason is that for a WL in $\cB'_1$, it must traverse the interface of another edge ball, whose LEF already includes the representation-identification factor of the party ball. For example, if the party ball is $\cB_B$, then the WL must traverse the interface of another $\cB_{AB}$: otherwise it can shrink to be contained in the party ball. Then the LEF for $\cB_{AB}$ includes a factor either \eqref{eq:11nn}, \eqref{eq:1122} or \eqref{eq:2233}, which all has \eqref{eq:QB'} as a factor.

\subsubsection{WL in an edge ball that does not traverse the interface}
The only case missing in the previous analysis is that the WL traverses an edge ball, but not through its interface: it goes in and out of the same half-ball. As shown in Fig.~\ref{fig:2Wil}(b), this case actually reduces to $\cB'_1$ when separating the ball into two balls: an edge ball with no WL and a party ball that the WL traverses. Thus according to the previous subsubsection, the LEF can be set to $1$.

\subsubsection{GEF for only one WL, which traverses each ball at most once}
The previous three subsubsections discuss all cases assuming there is only one WL, and it traverses each ball at most once. We see that as long as the WL is not contained in a single region where the GEF is trivial \eqref{eq:sum=1}, we only need to consider the cases in Table.~\ref{tab:Wilson} by setting all other LEFs to $1$.

Summing up the $\cB_{AB}$ cases in the third line of Table~\ref{tab:Wilson}, we get the GEF for EE 
\begin{widetext}
\begin{align}\label{eq:EEW}
    \frac{\tr{\rho_A^n}}{\tr{\rho_A^n}_0} &= \sum_{a_1a_1'\cdots a_na_n'} \tps_{a_1}\tps_{a_1'}^*\cdots \tps_{a_n}\tps_{a_n'}^* [11'\cdots nn'] \prod_{\beta\in \tB_1} d_a^{1-n}\prod_{\beta\in \tB_2\cup \tB_3} d_a^{1-n/2}\prod_{\beta\in \tB_4} d_a \nonumber\\ &= \sum_a \abs{\psi_a}^{2n} \nu_a^{-n} \prod_{\beta\in \tB_1} d_a^{1-n}\prod_{\beta\in \tB_2\cup \tB_3} d_a^{1-n/2}\prod_{\beta\in \tB_4} d_a = \sum_a \abs{\psi_a}^{2n} d_a^{(1-n)K_B},
\end{align}
\end{widetext} 
regardless of whether $w$ is a loop or not. Here the subscript $0$ means the normalized no-WL result, and we have used $[11'\cdots nn']^2=[11'\cdots nn']$ together with \eqref{eq:tps}. The difference among $\tB_1,\cdots,\tB_4$ turns out to cancel exactly with the normalization difference, which also holds for $\tB_5,\tB_6,\tB_7$ and for PT and CCNR as we will see.
\eqref{eq:EEW} leads to the second term in \eqref{eq:SA} by taking the limit \begin{equation}
    \lim_{n\rightarrow 1}\frac{1}{1-n}\log \left(\sum_a x_a^ny_a^{1-n}\right) = \sum_a x_a (\log y_a-\log x_a),
\end{equation}
where $\sum_a x_a = 1$.

For PT, we have two possibilities on whether $w$ traverses a $\cB_{AB}$ or not. If $K_B>0$, all representations $a_1,a'_1,\cdots,a_n,a'_n$ are equal due to the factor \eqref{eq:11nn} from one of the $\cB_{AB}$; otherwise they are equal in two disjoint groups with prefactor \eqref{eq:12nn} due to $\cB_{12}$. The GEF for PT is then 
\begin{widetext}
\begin{align}\label{eq:PTW}
    \frac{\tr{\rho_A^{T_2}}^n}{\tr{\rho_A^{T_2}}^n_0} = \left\{\begin{array}{cc}
        \sum_a \abs{\psi_a}^{2n} d_a^{(1-n)K_B}d_a^{(2-n)(K-K_B)}, & K_B>0 \\
        \left(\sum_a \abs{\psi_a}^{n} d_a^{(1-n/2)(K-K_B)} \right)^2,  & K_B=0
    \end{array}\right.
\end{align}

\end{widetext}

For CCNR, since any two of \eqref{eq:12nn},\eqref{eq:1122} and \eqref{eq:2233} multiply to \eqref{eq:11nn}, there are also three possibilities for the global prefactor: Among the three quantities $K_{12}, K_{1B}, K_{2B}$, if at least two are non-zero then the prefactor is \eqref{eq:11nn}; if only $K_{1B}$ or $K_{2B}$ is non-zero the prefactor is \eqref{eq:1122} (which is equivalent to \eqref{eq:2233}); finally if only $K_{12}>0$ the prefactor is \eqref{eq:12nn}. To summarize, the GEF for CCNR is
\begin{widetext}
\begin{align}\label{eq:CCNRW}
    \frac{\tr{R_\rho^\dagger R_\rho}^{n/2}}{\tr{R_\rho^\dagger R_\rho}^{n/2}_0} = \left\{\begin{array}{cr}
        \sum_a \abs{\psi_a}^{2n} d_a^{-nK_B/2}d_a^{(2-n)(K-K_B)}, & \text{if at least two in } \{K_{12}, K_{1B}, K_{2B}\} \text{ are nonzero}
         \\
        \left(\sum_a \abs{\psi_a}^4 d_a^{-K_B} \right)^{n/2},  &  \text{if only one in } \{K_{1B},K_{2B}\} \text{ is nonzero and } K_{12}=0  \\
        \left(\sum_a \abs{\psi_a}^{n} d_a^{(1-n/2)(K-K_B)} \right)^2,  & K_B=0
    \end{array}\right.
\end{align}
\end{widetext}

\subsubsection{General case: Factorization property}
Although Section \ref{sec:norm} only focuses on normalization, the factorization property of LEF in \eqref{eq:Qnb} and \eqref{eq:Qw} generally holds for all three entanglement measures, at any replica number $n$. The LEFs by cutting $S^2$-tubes are factorized simply due to Assumption \ref{asmpW} where at most one WL is contained in each $D^2$-tube. One the other hand, the LEF of Wilson loops in an $S^3$ generated by gluing $\tilde{n}$ balls is also factorized (Section \ref{sec:norm} being the $\tilde{n}=2$ case), because the loops do not link so that \eqref{eq:Zaa} follows. The absence of linking is guaranteed because WLs do not touch on the surface of the ball, so they are in ``different continents'' in the $S^3$. 

Thanks to such factorization, the GEFs \eqref{eq:EEW},\eqref{eq:PTW} and \eqref{eq:CCNRW} hold even if the WL traverses a ball more than once. This is justified by the following two observations. First, the different situations (lines) in \eqref{eq:PTW} and \eqref{eq:CCNRW} only depend on whether each $K$ quantity ($K,K_P$ or $K_{PP'}$) is zero or not. Thus traversing a ball more than once does not change the zeroness of the $K$s, so the situation (line) is unchanged. Second, the exponents of the power of $d_a$ in \eqref{eq:EEW},\eqref{eq:PTW} and \eqref{eq:CCNRW} are all proportional to $K$s. This agrees with the factorization property, because traversing an interface one more time just adds one to the corresponding $K$. Finally, multiple WLs also factorize due to \eqref{eq:Qnb}, so each WL contributes by \eqref{eq:EEW},\eqref{eq:PTW} and \eqref{eq:CCNRW}, while all WLs are summed over in the end. This completes our derivation for \eqref{eq:main} and \eqref{eq:renyi}.

\bibliography{Bib_Refs.bib}
\end{document}